\documentclass[12pt]{article}
\usepackage[margin=1in]{geometry}
\usepackage{amsmath,amssymb,amsthm,mathtools}
\usepackage{booktabs,makecell,array,longtable}
\usepackage{graphicx}
\usepackage[table]{xcolor}
\usepackage{natbib}
\usepackage[hidelinks,hypertexnames=false]{hyperref}
\usepackage{enumitem}
\usepackage{pdflscape}
\usepackage{float}

\newtheorem{assumption}{Assumption}
\newtheorem{theorem}{Theorem}
\newtheorem{proposition}{Proposition}

\newcommand{\E}{\mathbb E}
\newcommand{\Pbb}{\mathbb P}

\newcommand{\G}{\mathcal G}

\newcommand{\BaR}{\operatorname{BaR}}

\title{Adaptive AI Delegation under Uncertainty:\\A Bayesian Governance Policy for Sequential Decision Authority}
\author{Matthew Dixon}
\date{\today}

\begin{document}
\maketitle

\begin{abstract}
Organizations increasingly use large language models and agentic AI systems to generate probabilistic assessments and candidate actions in high-consequence settings. This creates a managerial problem that is fundamentally different from prediction: how should organizations allocate decision authority to AI-generated recommendations as evidence quality, uncertainty, and organizational objectives evolve over time? Existing AI governance frameworks emphasize transparency, documentation, oversight, and regulatory compliance, but they provide limited quantitative guidance for dynamically allocating decision authority under uncertainty. To address this challenge, we formulate adaptive AI delegation as a Governance-Aware Partially Observable Markov Decision Process (POMDP) in which Bayesian inference estimates the informational state and sequential optimization determines the appropriate degree of delegated AI authority.

The paper also develops a quantitative validation and benchmarking framework for governance policies. Synthetic stress tests, reported LLM-confidence robustness, forecast-accuracy validation, governance-appetite sensitivity, and fragile-AI early-warning experiments evaluate whether the proposed policy exhibits graceful degradation, robustness to confidence-only perturbations, adaptive delegation under improving evidence quality, and interpretable calibration of institutional conservatism. Finally, the Governance-Aware POMDP is benchmarked against five representative governance strategies operating under identical Bayesian beliefs, information, and governance objectives. The results show that while specialized heuristics remain effective under stationary operating conditions, sequential Bayesian governance provides the strongest general-purpose governance policy across heterogeneous AI-quality regimes by dynamically allocating organizational decision authority under uncertainty.

\end{abstract}

\noindent\textbf{Keywords:}  AI governance; adaptive delegation; Bayesian decision theory; POMDP; model validation; Belief-at-Risk; organizational decision making; human--AI governance.

\section{Introduction}
\label{sec:introduction}

\textbf{Motivation.} Artificial intelligence (AI) is increasingly embedded within organizational decision processes, supporting decisions in finance, healthcare, cybersecurity, manufacturing, supply-chain management, and other high-consequence domains. In many settings, AI systems no longer function solely as predictive models; instead, they operate as decision-support agents whose recommendations influence human judgment and organizational action. As organizations adopt increasingly capable foundation models and agentic AI systems, a central managerial challenge emerges: \emph{how should decision authority be allocated between AI-generated intelligence and established organizational decision processes when the reliability of AI is uncertain?}

\textbf{Governance.} Existing approaches to AI governance emphasize model validation, monitoring, documentation, transparency, explainability, regulatory compliance, and post-deployment oversight \citep{nist2023,iso42001,oecd2019,eu_ai_act}. Although these mechanisms are essential, they are largely static. Organizations repeatedly decide whether AI-generated recommendations should be accepted, moderated, or overridden as new evidence becomes available and operating conditions evolve. Governance therefore becomes a sequential organizational decision problem rather than a one-time certification exercise. The relevant managerial question is not simply whether an AI model is accurate on average, but how organizations should dynamically regulate the influence of AI-generated intelligence over time.

\textbf{Delegation.} We argue that this challenge is fundamentally one of adaptive organizational delegation. Classical organizational theory distinguishes information acquisition from decision authority and emphasizes that organizations frequently separate these functions to improve decision quality \citep{simon1947,galbraith1974,jensen1995,aghion1997}. Bayesian decision theory similarly distinguishes inference from action by representing uncertainty through posterior beliefs that evolve as new information arrives \citep{blackwell1962discrete,puterman1994mdp,kaelbling1998}. Rather than allowing AI systems to determine organizational actions directly, organizations require a governance mechanism that continuously determines the appropriate degree of authority assigned to AI-generated intelligence. This mechanism should adapt to evolving evidence while preserving the ability to revert toward independently validated organizational knowledge whenever governance risk becomes sufficiently large.

\textbf{Research Questions.} This paper addresses four related questions. First, how should organizations dynamically allocate decision authority between AI-generated intelligence and existing organizational decision processes? Second, how should uncertainty regarding AI reliability be represented and updated as new information becomes available? Third, can AI governance itself be formulated as a sequential optimization problem rather than as a collection of static validation rules? Fourth, does sequential Bayesian governance provide measurable organizational advantages relative to simpler governance mechanisms such as static delegation, confidence thresholds, Bayesian shrinkage, reliability-only delegation, and rule-based governance?

\textbf{Approach.} To answer these questions, we formulate adaptive AI delegation as a Governance-Aware Partially Observable Markov Decision Process (POMDP). Bayesian filtering estimates posterior beliefs regarding the latent operating environment, while a governance policy maps those beliefs into delegated decision authority assigned to AI-generated recommendations. Unlike conventional POMDP applications, the optimization variable is not the operational decision itself but the degree of organizational authority assigned to AI-generated intelligence. Governance therefore becomes an adaptive organizational capability rather than a collection of manually specified confidence thresholds or approval rules.

\textbf{Benchmarking.} A distinguishing feature of the paper is the development of a quantitative benchmarking framework for AI governance policies. Rather than evaluating the proposed Governance-Aware POMDP in isolation, we compare it against five representative governance mechanisms: Static Delegation, Confidence Threshold, Reliability-Only Delegation, Bayesian Shrinkage, and SR11-7 Style Governance. Every benchmark policy operates under the identical Bayesian filtering framework, identical information set, identical simulated environments, and identical governance objective. The policies differ only in how delegated AI authority is determined. This design isolates the incremental value of sequential governance from differences in forecasting models, Bayesian inference, or utility specification, providing a principled basis for comparing competing governance strategies.

\textbf{Contributions.} This paper makes six principal contributions. First, it reframes AI governance as a dynamic organizational delegation problem in which governance itself becomes the object of sequential optimization rather than post hoc model validation. Second, it develops a parsimonious Bayesian framework that separates inference, validation, governance, and execution while representing organizational uncertainty through posterior beliefs. Third, it formulates a Governance-Aware POMDP that continuously regulates delegated AI authority as uncertainty evolves, replacing heuristic confidence thresholds with principled sequential optimization. Fourth, it establishes theoretical properties of the proposed governance framework, including existence of an optimal governance policy, belief-state sufficiency, monotone adaptation to governance risk, robustness to confidence-only perturbations, and adaptive delegation under improving evidence quality. Fifth, it introduces a comparative benchmarking methodology in which multiple governance policies are evaluated under identical Bayesian beliefs, information, and governance objectives. Finally, it validates the resulting governance architecture through synthetic experiments and historical market replay, demonstrating that adaptive Bayesian governance provides the strongest general-purpose governance policy across heterogeneous AI-quality regimes while remaining competitive in investment performance.

\textbf{Findings.} The empirical results reveal an important distinction between specialized and adaptive governance policies. Bayesian Shrinkage performs best when AI quality is persistently poor because immediate contraction toward the validated reference process minimizes unnecessary AI exposure. However, the Governance-Aware POMDP achieves the highest governance utility across fragile, improving, and high-quality AI regimes, demonstrating that sequential governance becomes increasingly valuable as AI quality evolves over time. Rather than seeking to maximize AI utilization, the proposed framework dynamically allocates organizational authority according to Bayesian evidence quality, governance risk, and institutional objectives.

\textbf{Managerial Perspective.} The proposed framework is intended to augment rather than replace human expertise. Governance decisions remain under organizational control, while AI contributes probabilistic information whose influence varies according to Bayesian evidence, statistical reliability, and governance risk. This perspective is particularly relevant in regulated industries, where organizations must justify not only individual decisions but also the governance processes through which those decisions are made. By embedding governance directly within sequential decision making, organizations can adapt continuously to changing evidence without relying on static approval rules or binary override mechanisms.

\textbf{Paper Organization.} The remainder of the paper is organized as follows. Section~\ref{sec:related_literature} reviews the literature on Bayesian decision making, organizational delegation, AI governance, and human--AI collaboration. Section~\ref{sec:design_principles} formulates the adaptive delegation problem, while Sections~\ref{sec:architecture}--\ref{sec:theory} develop the governance architecture, Bayesian inference framework, governance optimization, Belief-at-Risk, and the principal theoretical properties. Section~\ref{sec:experimental_design} presents the experimental methodology, Section~\ref{sec:synthetic_validation} reports the synthetic governance laboratory and comparative benchmark results, Section~\ref{sec:discussion} discusses managerial implications, and Section~\ref{sec:historical_validation} evaluates the framework using historical market replay. The paper concludes by discussing broader implications for adaptive organizational delegation under uncertainty.

\section{Related Literature}
\label{sec:related_literature}

The proposed framework lies at the intersection of Bayesian sequential decision making, organizational delegation, AI governance, and model risk management. Each literature addresses an important part of decision making under uncertainty, yet none directly formulates the dynamic allocation of decision authority to AI-generated recommendations as the central optimization problem. The governance policy developed here builds on these complementary traditions while treating governance itself as the object of optimization.

\subsection{Bayesian Sequential Decision Making}
Bayesian decision theory provides the statistical foundation for learning under uncertainty by combining prior information with new observations through recursive belief updating. Partially observable Markov decision processes extend this framework to sequential decision problems in which the true system state cannot be observed directly but must instead be inferred from noisy information \citep{smallwood1973optimal,puterman1994mdp,kaelbling1998}. Posterior beliefs become sufficient statistics for future decision making, allowing dynamic optimization to proceed over belief states rather than complete observation histories. The present work adopts this Bayesian perspective but departs from the classical formulation in an important respect. Traditional POMDPs optimize operational actions directly, whereas the proposed governance policy optimizes actions that regulate the influence of AI-generated recommendations before execution. Bayesian inference therefore estimates organizational uncertainty, while governance determines how that uncertainty should affect the delegation of decision authority.

\subsection{Organizational Delegation and Decision Rights}
The organizational literature has long recognized that information acquisition and decision authority need not reside within the same organizational unit. Simon's theory of bounded rationality, Galbraith's information-processing view of organizations, Jensen and Meckling's analysis of organizational architecture, and Aghion and Tirole's distinction between formal and real authority all emphasize that effective performance depends critically on how decision rights are allocated \citep{simon1947,galbraith1974,jensen1995,aghion1997}. These theories largely predate modern AI systems and therefore do not explicitly consider organizations in which one decision-making actor is an adaptive probabilistic AI model. The governance policy developed here extends this literature by treating AI as an organizational actor whose delegated authority is continuously adjusted through Bayesian learning and sequential optimization rather than fixed ex ante.

\subsection{AI Governance and Model Risk}
Rapid advances in AI have generated substantial interest in governance, explainability, accountability, fairness, transparency, auditability, and model oversight. Frameworks such as the NIST AI Risk Management Framework, ISO/IEC AI governance standards, the OECD AI Principles, and the EU AI Act provide important guidance for responsible AI deployment \citep{nist2023,iso42001,oecd2019,eu_ai_act}. Financial institutions also rely on model risk management practices, including SR 11-7, to ensure appropriate validation and oversight of quantitative models \citep{sr117,boe2022aipublicprivate,boe2024aisurvey,fsb2024ai}. These frameworks, however, primarily describe governance processes rather than governance optimization. They specify what organizations should monitor and document but generally do not provide quantitative methods for determining how much authority should be delegated to AI-generated recommendations as evidence quality evolves. The present work complements rather than replaces these governance frameworks by providing a mathematical mechanism through which governance policies can be optimized, calibrated, and empirically validated.

\subsection{Earlier Bayesian Governance Frameworks}
Earlier technical work on agentic AI as a partially observable decision system and on Belief-at-Risk developed the probabilistic foundations on which this paper builds \citep{dixon2026pomdp,dixon2026bar}. Those studies treated AI-generated intelligence as evidence entering a Bayesian state filter and introduced governance diagnostics for quantifying uncertainty, instability, and consequence. The present paper differs in emphasis and scope. It treats the managerial object of interest as delegated decision authority, embeds the diagnostic quantities in a governance utility, and links the resulting policy to organizational delegation, decision rights, and quantitative validation. In this sense, the current contribution extends a technical Bayesian governance framework into a decision-theoretic theory of adaptive AI delegation.

\subsection{Human--AI Decision Making and Algorithmic Reliance}
Empirical research on human--AI decision making shows that decision makers may over-rely on algorithms, under-rely on them after observing errors, or use them differently depending on how recommendations are presented \citep{dietvorst2015algorithm,logg2019algorithm,amershi2019guidelines,bansal2019updates,green2019principles}. Organizational research has similarly examined how algorithms reshape control, expertise, and work design \citep{kellogg2020algorithms,shrestha2019organizational,raisch2021ai}. The proposed framework differs from this literature by focusing not on whether humans accept algorithmic advice, but on how organizations can design a governance policy that dynamically allocates authority between AI-generated recommendations and independently validated reference processes. This perspective treats trust as an endogenous policy variable rather than a behavioural response alone.

\subsection{Dynamic Programming and Approximate Optimization}
The governance policy also draws on dynamic programming and approximate dynamic programming \citep{bellman1957dynamic,bertsekas2017dynamic,powell2022reinforcement}. Classical dynamic programming provides a principled approach for solving sequential optimization problems under uncertainty, but exact solutions often become computationally intractable. The governance policy therefore employs an approximate Bellman recursion over a finite, interpretable governance action space. Importantly, this approximation is applied to governance decisions rather than portfolio allocations, which preserves interpretability while maintaining the sequential nature of the decision problem. Organizations can therefore implement delegation policies that remain transparent, auditable, and operationally practical without sacrificing the logic of dynamic optimization.

\subsection{Positioning}
The proposed framework integrates these streams into a single decision-theoretic architecture. Bayesian inference provides statistically coherent belief updates, organizational theory motivates the allocation of decision authority, AI governance motivates independent validation and oversight, and dynamic programming supplies the optimization machinery. The contribution is to treat governance as a sequential decision problem whose objective is the allocation of organizational authority to AI-generated recommendations under uncertainty. This positioning shifts the focus from AI model performance alone to the behaviour of the governance policy that mediates between AI inference and organizational action.

\section{Design Principles and Formal Delegation Problem}
\label{sec:design_principles}

This section translates the organizational problem introduced above into a formal governance model. The central modeling choice is to optimize delegated decision authority rather than the AI system's operational recommendation directly. This distinction is important because organizations adopting AI must decide not only what the AI recommends, but how much authority that recommendation should receive relative to validated reference processes, internal controls, and institutional risk appetite.

The governance policy is built around five design principles. First, inference and governance are separated because generating information and deciding how much authority should be assigned to that information are distinct organizational tasks. Second, the policy regulates delegated authority rather than selecting the operational action directly. Third, organizational trust should depend primarily on externally validated evidence and governance risk rather than on internally generated confidence declarations. Fourth, governance should adapt as information quality changes. Fifth, the resulting policy should remain interpretable enough to support review, audit, and regulatory oversight. These principles establish the qualitative requirements for adaptive AI governance. The remainder of this section translates them into a quantitative decision framework in which delegated authority becomes the object of optimization.

\subsection{Decision Epoch and Information Inputs}

\textbf{Organizational perspective.} At each decision epoch, the organization observes information from multiple sources. These may include structured observations such as prices, returns, volatility, balance-sheet variables, or operational metrics, as well as unstructured observations such as text, news, filings, analyst reports, or management commentary. The AI system processes this information and produces three conceptually distinct outputs: probabilistic evidence, a candidate recommendation, and a reported confidence score. These outputs should not be collapsed into a single trust measure because they play different roles in the governance process.

\begin{equation}
(q_t,u_t^{AI},c_t)=\mathcal L_\theta(O_t).
\label{eq:section3_llm_outputs}
\end{equation}

\textbf{Role in the framework.} In Equation~\eqref{eq:section3_llm_outputs}, the probability vector $q_t$ is treated as evidence about latent conditions, $u_t^{AI}$ is a candidate recommendation, and $c_t$ is a self-reported confidence measure. The governance policy does not equate confidence with trust. Instead, it treats AI outputs as inputs to a broader organizational decision process that also incorporates Bayesian beliefs, validation history, reference recommendations, and governance risk.

This separation of evidence, recommendation, and confidence provides the informational basis for the delegation decision defined next.

\subsection{Delegated Decision Authority}

\textbf{Decision problem.} Organizations rarely face an all-or-nothing choice between accepting an AI recommendation and rejecting it completely. In high-consequence settings, the relevant managerial question is how much weight to place on AI-generated intelligence relative to an independently validated reference process. The executed recommendation therefore makes delegation explicit.

\begin{equation}
 u_t^{Exec}=\alpha(g_t)u_t^{AI}+(1-\alpha(g_t))u_t^{Ref}.
 \label{eq:exec_intro}
\end{equation}

\textbf{Interpretation.} The governance action $g_t\in\mathcal G$ determines the adaptive delegation weight $\alpha(g_t)\in[0,1]$. When $\alpha(g_t)=0$, the organization follows the reference recommendation. When $\alpha(g_t)=1$, the organization fully delegates to the AI recommendation. Intermediate values represent controlled delegation, allowing the organization to preserve the informational value of AI while retaining a validated fallback.

Delegated execution therefore becomes a continuous organizational design choice rather than a binary approval decision. This representation provides the operational interface between AI-generated intelligence and institutional decision making. Having defined how authority is allocated between AI-generated and reference recommendations, the remaining question is what information the governance policy should use when making this allocation.

\subsection{Governance State}

\textbf{Decision context.} Delegation decisions require more information than the AI recommendation alone. Managers need posterior beliefs about latent conditions, the AI recommendation, the reference recommendation, governance risk, reported confidence, reliability diagnostics, and additional validation signals. These variables define the state on which the governance policy operates.

\begin{equation}
 s_t^G=(b_t,u_t^{AI},u_t^{Ref},\BaR_t,c_t,\tau_t,\eta_t).
 \label{eq:section3_governance_state}
\end{equation}

\textbf{Interpretation.} The governance state separates quantities that are often conflated in AI governance discussions. The posterior belief $b_t$ summarizes Bayesian evidence, $\BaR_t$ summarizes governance uncertainty and consequence, $c_t$ records reported confidence, $\tau_t$ captures statistical reliability from validation history, and $\eta_t$ collects additional diagnostics. This separation supports auditability because each component has a distinct organizational interpretation.

The governance state therefore identifies the information available to the organization before authority is allocated. The next subsection specifies the objective that determines how such states are mapped into governance actions.

\subsection{Governance Objective}

\textbf{Optimization objective.} Governance should optimize long-run organizational performance rather than immediate AI accuracy alone. Each governance action affects current execution, future observations, and subsequent opportunities to learn. The formal problem therefore chooses a governance policy that maximizes expected discounted organizational utility.

\begin{equation}
 \max_{\pi_G}\; \E\left[\sum_{t=0}^{T}\gamma^t r_G(s_t^G,g_t)\right]
 \quad\text{subject to}\quad g_t=\pi_G(s_t^G),
 \label{eq:formal_problem}
\end{equation}

\textbf{Managerial interpretation.} Equation~\eqref{eq:formal_problem} is optimized over governance policies rather than operational policies. This distinction is central to the paper. The organization is not optimizing portfolio trades, diagnoses, or operational actions directly; instead, it optimizes the allocation of decision authority to AI-generated recommendations as uncertainty and governance risk evolve. The reward $r_G$ balances expected organizational value, operational risk, Belief-at-Risk, deviation from the reference process, and intervention costs.

The governance formulation introduced here establishes the decision problem. The next section explains how this policy is operationalized through a modular governance architecture that separates inference, validation, governance, and execution.

Table~\ref{tab:notation} summarizes the notation used throughout the paper. The table appears early because the framework deliberately separates evidence, recommendations, confidence, delegation, and execution.

\begin{table}[H]
\centering
\small
\caption{Notation for the adaptive AI delegation framework. The table distinguishes the principal informational, governance, and execution variables used in the model. This distinction is managerially important because AI evidence, AI confidence, organizational trust, and delegated authority are different quantities.}
\label{tab:notation}
\resizebox{\textwidth}{!}{%
\begin{tabular}{ll}
\toprule
Symbol & Description \\
\midrule
$X_t$ & Latent organizational or market state \\
$O_t$ & Structured and unstructured observations available at time $t$ \\
$q_t$ & AI-generated probabilistic evidence over latent states \\
$u_t^{AI}$ & AI-generated recommendation \\
$c_t$ & Reported AI confidence \\
$b_t$ & Bayesian posterior belief over latent states \\
$u_t^{Ref}$ & Independently validated reference recommendation \\
$u_t^{Exec}$ & Executed recommendation after governance \\
$g_t$ & Governance action \\
$\alpha(g_t)$ & Adaptive delegation weight assigned to AI-generated recommendations \\
$\BaR_t$ & Belief-at-Risk, summarizing uncertainty, instability, and consequence \\
$\tau_t$ & Statistical reliability diagnostic estimated from validation history \\
$s_t^G$ & Governance state \\
$r_G$ & One-period governance utility \\
$V(s_t^G)$ & Governance value function \\
$\lambda_{BaR}$ & Governance appetite parameter on Belief-at-Risk \\
\bottomrule
\end{tabular}%
}
\end{table}

The formulation shifts the focus of the paper from AI model evaluation to organizational decision design. Bayesian inference estimates uncertain latent conditions, governance determines the appropriate degree of delegated authority, and execution implements the resulting governed decision. The next section operationalizes this separation through an explicit governance architecture before developing the Bayesian inference and optimization components in detail.

\section{Governance Architecture}
\label{sec:architecture}

Section~\ref{sec:design_principles} formulated adaptive AI delegation as a formal organizational decision problem. The organization observes AI-generated evidence, candidate recommendations, reported confidence, reference recommendations, and governance diagnostics, but it does not allow the AI system to determine execution directly. Instead, the governance policy allocates decision authority by determining how much influence AI-generated intelligence should receive relative to an independently validated reference process. This section operationalizes that formal problem by describing the architecture through which evidence, validation, risk diagnostics, governance actions, and execution are connected.

The architecture is designed around a simple managerial principle: AI may inform organizational decisions, but governance determines the authority assigned to that information. This distinction is important because the same AI system may provide useful probabilistic evidence while also producing recommendations that require moderation, validation, or rejection. The architecture therefore separates the flow of information from the allocation of decision rights.

Figure~\ref{fig:architecture} summarizes the governance architecture. Structured observations, such as prices, returns, volatility, macroeconomic variables, and portfolio characteristics, are combined with unstructured observations, such as news, research reports, filings, and analyst commentary. The AI system processes these observations and produces three distinct outputs: probabilistic evidence over latent states, a candidate recommendation, and a reported confidence score. These outputs are then mediated by Bayesian filtering, validation diagnostics, Belief-at-Risk, and the governance policy before any executed decision is produced.

\begin{figure}[H]
\centering
\IfFileExists{figures/figure1_bayesian_governance_architecture.png}{\includegraphics[width=0.98\textwidth]{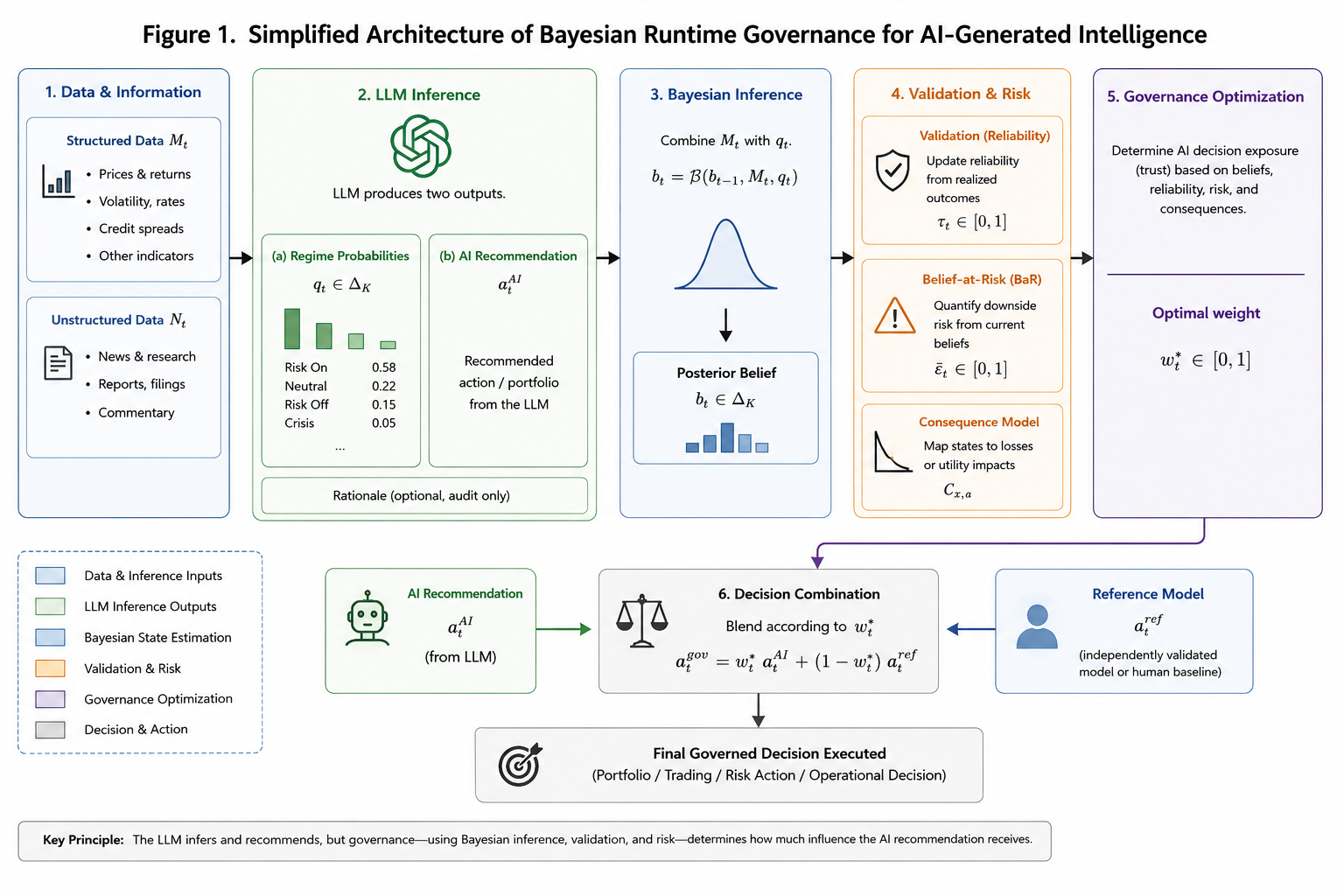}}{\fbox{Missing Figure 1 architecture.}}
\caption{Governance architecture for adaptive AI delegation. Structured and unstructured observations are processed by the AI system to produce probabilistic evidence, candidate recommendations, and reported confidence. Bayesian filtering estimates posterior beliefs; validation and Belief-at-Risk summarize reliability, uncertainty, instability, and consequence; and the governance policy uses sequential optimization to determine the adaptive delegation weight assigned to AI-generated recommendations. Managerially, the architecture separates inference, validation, governance, and execution so that AI recommendations are mediated by a quantitative governance policy rather than directly controlling organizational decisions. This figure provides the conceptual foundation for the belief-state sufficiency and optimal-policy results developed in Sections~\ref{sec:bayesian_pomdp}--\ref{sec:theory}.}
\label{fig:architecture}
\end{figure}

The key feature of the architecture is separation. Bayesian filtering estimates what the organization believes about the latent environment; validation estimates how reliable the AI evidence has been; Belief-at-Risk measures the organizational consequences of acting under uncertain and unstable beliefs; and the governance policy determines delegated decision authority. This separation is consistent with the organizational view that information acquisition and authority allocation are distinct design problems \citep{simon1947,galbraith1974,jensen1995}. It is also consistent with AI governance principles that emphasize oversight and independent validation rather than direct automation of high-consequence decisions \citep{nist2023,iso42001,sr117}.

The architecture should therefore be read as an organizational decision system rather than as a forecasting pipeline. The AI component supplies information and candidate actions, but the organization allocates authority through the governance policy. This distinction is essential for regulated settings, where the ability to audit, explain, calibrate, and override AI-generated recommendations is often as important as the predictive performance of the underlying model.

The architecture also clarifies why reported AI confidence is not sufficient for governance. Reported confidence is only one input to the governance process and is evaluated alongside posterior beliefs, validation-based reliability, Belief-at-Risk, and the reference process. As a result, the organization can distinguish between an AI system that is confident, an AI system that is empirically reliable, and an AI recommendation that should receive delegated authority in the current decision context.

This section establishes the organizational and information architecture, but it does not yet specify how the organization updates its beliefs as evidence arrives. The next section develops that Bayesian inference mechanism. It shows how structured observations, unstructured evidence, and AI-generated probabilistic signals are transformed into posterior belief states. These beliefs then become the informational foundation for the sequential governance policy developed in Section~\ref{sec:optimal_governance}.

\section{Bayesian Governance under Uncertainty}
\label{sec:bayesian_pomdp}

The architecture in Section~\ref{sec:architecture} separates inference from governance. This section develops the inference component. Organizations operating under uncertainty rarely face a shortage of information; instead, they face the more difficult problem of determining how heterogeneous information should influence decisions. Human expertise, established quantitative models, market observations, and AI systems all provide potentially valuable but imperfect signals about future states of the environment. Bayesian decision theory provides a rigorous framework for summarizing such evidence through posterior beliefs \citep{blackwell1962discrete,puterman1994mdp}, while the governance problem determines how those beliefs should influence delegated authority.

The purpose of this section is therefore not to decide how much authority should be delegated to AI. That optimization problem is developed in Section~\ref{sec:optimal_governance}. The purpose here is to define the informational state on which such governance decisions are based.

\subsection{Latent State Representation}

\textbf{Organizational perspective.} The true environment relevant to AI delegation is only partially observable. In the empirical application, this environment is represented by latent market regimes; in other organizational settings, it could represent operational states, risk conditions, demand regimes, or reliability states. A finite latent state representation provides a tractable way to summarize the hidden conditions under which AI-generated recommendations are evaluated.

\begin{equation}
X_t\in\mathcal X=\{1,\ldots,K\}.
\label{eq:latent_state}
\end{equation}

\textbf{Role in the framework.} The state $X_t$ is not the internal state of the AI system. It represents the organization's latent decision environment. Governance acts under uncertainty because the organization observes evidence about $X_t$, but does not observe $X_t$ directly.

This latent-state representation provides the object of inference; the next step is to specify how that object evolves through time.

\subsection{State Dynamics}

\textbf{Sequential evolution.} Organizational environments evolve over time. A Markov transition model provides a parsimonious way to encode persistence and regime changes while retaining the sequential structure required for governance optimization.

\begin{equation}
P_{ij}=\Pbb(X_{t+1}=j\mid X_t=i).
\label{eq:transition_model}
\end{equation}

\textbf{Interpretation.} The transition matrix represents the organization's prior model of how latent states evolve. It supplies historical continuity to the Bayesian filter, preventing governance from reacting only to the most recent AI output.

The transition model supplies the prior component of inference. The current evidence component comes from the AI system and other observations available to the organization.

\subsection{AI Evidence and Recommendations}

\textbf{Information representation.} The AI system converts the available observation set into three distinct objects: probabilistic evidence, a candidate recommendation, and reported confidence. These outputs are separated because they have different governance meanings and should not be collapsed into a single trust score.

\begin{equation}
 (q_t,u_t^{AI},c_t)=\mathcal L_\theta(O_t).
 \label{eq:llm_outputs}
\end{equation}

\textbf{Interpretation.} The probability vector $q_t$ represents AI-generated evidence about latent states, $u_t^{AI}$ represents a proposed action or recommendation, and $c_t$ is the AI system's reported confidence. The governance policy treats reported confidence as an input to be evaluated, not as a sufficient basis for organizational trust.

This decomposition keeps evidence, recommendations, and confidence distinct. Bayesian filtering uses $q_t$ to update beliefs, while the governance policy later evaluates $u_t^{AI}$ and $c_t$ alongside reference recommendations and risk diagnostics.

\subsection{Bayesian Filtering}

\textbf{Predictive belief.} Because governance decisions are sequential, organizational confidence should be updated systematically as evidence arrives. The first step is to propagate the previous posterior belief through the transition model, producing the belief the organization would hold before observing the current AI evidence.

\begin{equation}
\Pi_t=P^\top b_{t-1}.
\label{eq:predictive_belief}
\end{equation}

\textbf{Interpretation.} The vector $\Pi_t$ is the prior predictive belief before incorporating the current AI evidence. It summarizes what the organization would believe based only on past beliefs and the transition model.

The predictive belief preserves continuity with past evidence. The next step incorporates the current AI-generated probabilistic evidence.

\textbf{Bayesian update.} The current evidence $q_t$ is incorporated using Bayes' rule. The update balances the prior predictive belief against the new AI-generated evidence.

\begin{equation}
 b_t(i)=\frac{q_t(i)\Pi_t(i)}{\sum_{j=1}^{K}q_t(j)\Pi_t(j)}.
 \label{eq:bayes_update}
\end{equation}

\textbf{Decision relevance.} The posterior belief $b_t$ summarizes all information relevant for subsequent governance decisions and becomes the sufficient statistic upon which organizational delegation is based. Consistent with the POMDP literature \citep{smallwood1973optimal,kaelbling1998}, the complete observation history is replaced by a recursively updated belief state. Unlike conventional POMDP formulations, however, posterior beliefs do not determine operational actions directly; they provide the informational foundation upon which the governance policy allocates decision authority.

Posterior beliefs alone remain insufficient because organizations must also evaluate the consequences of acting upon uncertain information. Governance therefore requires a diagnostic that distinguishes uncertainty that is merely informational from uncertainty that is unstable and potentially consequential. Belief-at-Risk, developed formally in Section~\ref{sec:bar}, provides this diagnostic by combining posterior uncertainty, belief instability, and expected downside consequence.

The Bayesian filtering framework therefore answers the inference part of the governance problem: it transforms sequential observations into posterior beliefs. The next section answers the decision part by specifying how these evolving beliefs, together with AI recommendations, reference recommendations, and governance risk, are translated into delegated authority through an optimized governance policy.

\section{Governance Policy and Sequential Optimization}
\label{sec:optimal_governance}

Section~\ref{sec:bayesian_pomdp} established how Bayesian filtering transforms heterogeneous observations into posterior beliefs over latent organizational states. Bayesian inference alone, however, does not determine organizational behaviour. The central managerial decision is how much authority should be delegated to AI-generated recommendations once uncertainty has been quantified. This section develops the governance policy that solves this delegation problem.

Unlike a classical POMDP, the optimization variable is not the operational action itself. The operational recommendation is produced independently by the AI system. Instead, the governance policy optimizes the degree of authority assigned to that recommendation relative to an independently validated reference process. Consequently, sequential optimization occurs at the governance layer rather than at the operational layer, separating statistical inference from organizational decision rights.

\subsection{Decision Context}

Governance decisions require substantially more information than the AI recommendation alone. Organizations must evaluate the current posterior belief regarding latent conditions, the AI recommendation, the independently validated reference recommendation, governance uncertainty, empirical reliability, reported confidence, and additional validation diagnostics. Collectively, these variables summarize the information available before organizational authority is allocated.

\begin{equation}
 s^G_t=(b_t,u^{AI}_t,u^{Ref}_t,\BaR_t,c_t,\tau_t,\eta_t).
 \label{eq:governance_state}
\end{equation}

The governance state intentionally separates informational uncertainty from organizational trust. Posterior beliefs summarize Bayesian evidence, Belief-at-Risk summarizes governance uncertainty and potential consequence, reliability summarizes historical validation performance, and confidence records the AI system's own assessment. Because these quantities represent different managerial concepts, they are retained as separate state variables rather than collapsed into a single trust score.

Having defined the information available to governance, the next step is to determine how delegated authority is implemented operationally.

\subsection{Governance Mechanism}

The governance action determines how organizational authority is allocated between AI-generated intelligence and an independently validated reference process. Rather than making a binary approval decision, the policy assigns a continuous delegation weight that blends the two recommendations.

\begin{equation}
 u^{Exec}_t=\alpha(g_t)u^{AI}_t+(1-\alpha(g_t))u^{Ref}_t.
 \label{eq:executed_action}
\end{equation}

This convex combination has several desirable properties. First, it preserves compatibility with existing organizational workflows because the reference process always remains available. Second, it permits gradual adaptation as AI evidence improves rather than requiring abrupt transitions between acceptance and rejection. Finally, it ensures that delegated authority becomes the optimization variable of the governance policy, while operational decision making remains unchanged.

The remaining question is how the organization should determine the appropriate delegation weight. This requires specifying the organizational objective that the governance policy seeks to optimize.

\subsection{Optimization Objective}

The governance objective is derived from a constrained organizational decision problem. At each decision epoch the organization seeks to maximize the expected value generated by the executed recommendation while satisfying constraints on operational risk, governance uncertainty, institutional consistency, and governance intervention.

Formally, consider the constrained optimization problem
\[
\max_{g_t\in\mathcal G}
\mathbb E_t\!\left[R_{t+1}(u_t^{Exec})\right]
\]
subject to acceptable levels of operational risk, delegated authority under uncertain beliefs, deviation from independently validated organizational decisions, and intervention cost. Introducing Lagrange multipliers for these constraints yields the equivalent penalized objective. The resulting governance utility is therefore interpreted as the Lagrangian representation of organizational preferences rather than as an ad hoc reward function.

\begin{equation}
\begin{aligned}
 r^G(s^G_t,g_t)
 &=\mathbb E_t\left[R_{t+1}(u^{Exec}_t)\right]
 -\lambda_{risk}Risk_t(u^{Exec}_t)
 -\lambda_{BaR}\alpha(g_t)\BaR_t \\
 &\quad -\lambda_{dev}\|u^{Exec}_t-u^{Ref}_t\|^2
 -\lambda_{int}C(g_t).
\end{aligned}
\label{eq:governance_utility}
\end{equation}

Each penalty has a direct managerial interpretation. The operational-risk penalty reflects institutional risk appetite. The Belief-at-Risk penalty limits delegated authority when Bayesian uncertainty and organizational consequence are simultaneously elevated. The deviation penalty discourages unnecessary departures from validated organizational knowledge, while the intervention penalty captures the operational cost of review, escalation, and governance oversight.

Importantly, Belief-at-Risk is weighted by the delegation variable itself. Governance uncertainty therefore becomes costly only when authority is actually delegated to the AI system. High uncertainty alone does not reduce organizational value if the governance policy allocates little authority to AI. This distinction captures the central design principle of the proposed framework: uncertainty becomes consequential only through organizational delegation.

The governance utility therefore specifies what constitutes an optimal delegation decision at a single decision epoch. Organizational decision making, however, is inherently sequential because today's delegation decision influences tomorrow's information, validation history, and governance opportunities.

\subsection{Dynamic Decision Problem}

Governance decisions influence future organizational states. Delegating greater authority today affects subsequent observations, validation outcomes, posterior beliefs, and future governance opportunities. The appropriate objective is therefore dynamic rather than myopic. Dynamic programming provides the natural framework for optimizing delegated authority over time.

\begin{equation}
 V(s^G_t)=\max_{g_t\in\mathcal G}\left\{r^G(s^G_t,g_t)+\gamma\mathbb E\left[V(s^G_{t+1})\mid s^G_t,g_t\right]\right\}.
 \label{eq:bellman_governance}
\end{equation}

The Bellman recursion characterizes an optimal governance policy that maximizes expected discounted organizational utility. Unlike conventional POMDP applications, the optimization is performed over governance actions rather than operational actions. Bayesian filtering estimates the informational state, while the Bellman recursion determines how that information should influence delegated authority.

From an organizational perspective, the framework separates three distinct functions. Bayesian inference estimates latent conditions, Belief-at-Risk quantifies governance uncertainty, and dynamic optimization determines the economically appropriate degree of delegation. Adaptive AI delegation therefore emerges as the solution of a sequential organizational optimization problem rather than as a manually specified approval rule or confidence threshold.

The next section returns to the Bayesian diagnostics introduced in Section~\ref{sec:bayesian_pomdp} by developing Belief-at-Risk as the quantitative governance-risk measure that enters the optimization objective and governs the adaptive allocation of decision authority.

\section{Belief-at-Risk as a Governance Diagnostic}
\label{sec:bar}

Effective governance requires more than measuring predictive uncertainty alone. Organizations must also consider whether uncertainty is changing rapidly and whether acting on uncertain information could generate materially adverse consequences. A posterior distribution may be diffuse but stable, or concentrated but moving sharply in a dangerous direction; the governance implications of these cases are different. Belief-at-Risk (BaR) is introduced to summarize these dimensions within a single diagnostic that can be incorporated directly into the governance utility.

\subsection{Uncertainty Diagnostic}

The first component of BaR measures posterior uncertainty through normalized entropy,
\begin{equation}
 H_t=-\frac{\sum_{i=1}^{K} b_t(i)\log b_t(i)}{\log K}.
 \label{eq:entropy}
\end{equation}
Entropy is high when posterior beliefs are dispersed across latent regimes and low when the organization has concentrated evidence. Although uncertainty is informative, it cannot distinguish between a stable uncertain environment and one in which beliefs are changing rapidly. This motivates a second diagnostic.

\subsection{Fragility Assessment}

Belief instability is measured by the Kullback--Leibler divergence $D_t=D_{KL}(b_t\|b_{t-1})$, which increases when the organization's beliefs shift abruptly as new evidence arrives. Large values of $D_t$ indicate increased fragility in the inference process. However, uncertainty and instability alone are insufficient because identical belief dynamics may imply very different organizational outcomes. We therefore introduce a consequence term, defined as $C_t=\sum_i b_t(i)L_i$, where $L_i$ denotes the loss or organizational consequence associated with latent state $i$.

\subsection{Integrated Governance Measure}

The preceding diagnostics are combined into the Belief-at-Risk measure,
\begin{equation}
 \BaR_t=H_t(1+D_t)C_t.
 \label{eq:bar}
\end{equation}
This specification is intentionally simple and interpretable. BaR increases when beliefs are uncertain, unstable, or associated with adverse consequences. The governance policy uses BaR not as a prediction of realized loss, but as a measure of the organizational risk associated with delegating authority to AI-generated recommendations under the current belief state. In this respect, BaR plays a role analogous to risk measures in finance while remaining grounded in Bayesian belief states and governance consequences.

The particular functional form in Equation~\eqref{eq:bar} is not essential to the architecture. Alternative coherent or distributionally robust governance risk measures could be substituted, provided they preserve the interpretation that higher governance uncertainty should increase the cost of delegating authority to AI. The empirical analysis uses this simple specification because it is transparent, monotone in the relevant components, and easy to validate through controlled perturbations.


\section{Structural Properties of the Governance Policy}
\label{sec:theory}

The governance policy should be evaluated not only by realized portfolio outcomes but also by the structural properties it imposes on organizational delegation. These properties clarify what is guaranteed by the formulation before any particular data set is considered. The results below are intentionally stated under transparent assumptions. They do not claim that AI delegation is universally optimal; instead, they establish that the policy is well defined, preserves reference-model fallback, keeps executed decisions admissible, reduces delegation when governance risk increases, and remains invariant to confidence-only perturbations when validated evidence and recommendations are unchanged. This section therefore provides the theoretical bridge between the decision model and the validation experiments.

\begin{assumption}[State, action, and utility primitives]
\label{ass:governance_primitives}
The governance action set $\mathcal G$ is finite and totally ordered by delegated authority: $g_i\preceq g_j$ if and only if $\alpha(g_i)\leq \alpha(g_j)$, where $\alpha(g)\in[0,1]$. The one-period utility $r_G(s,g)$ is bounded and measurable for every admissible state-action pair, the discount factor satisfies $\gamma\in(0,1)$, and the transition kernel $Q(\cdot\mid s,g)$ is well defined on the governance state space. The AI recommendation $u^{AI}$ and the reference recommendation $u^{Ref}$ belong to a convex admissible set $\mathcal U$ in a normed vector space.
\end{assumption}

\begin{theorem}[Existence and uniqueness of the governance value function]
\label{thm:existence}
Under Assumption~\ref{ass:governance_primitives}, the Bellman operator
\[
(\mathcal T V)(s)=\max_{g\in\mathcal G}\left\{r_G(s,g)+\gamma\int V(s')Q(ds'\mid s,g)\right\}
\]
is a contraction on the space of bounded measurable value functions under the supremum norm. Hence there exists a unique bounded fixed point $V^\star=\mathcal T V^\star$. Moreover, because $\mathcal G$ is finite, there exists an optimal stationary governance policy
\[
\pi_G^\star(s)\in\arg\max_{g\in\mathcal G}\left\{r_G(s,g)+\gamma\int V^\star(s')Q(ds'\mid s,g)\right\}.
\]
\end{theorem}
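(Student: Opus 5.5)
The proof is a standard discounted dynamic programming argument (Blackwell's sufficient conditions, or a direct sup-norm estimate). We work on $\mathcal B(S)$, the Banach space of bounded measurable functions on the governance state space, with the supremum norm. The plan has three steps: show $\mathcal T$ maps $\mathcal B(S)$ into itself, show it is a $\gamma$-contraction, and then use Banach's fixed point theorem together with finiteness of $\mathcal G$ to obtain $V^\star$ and a measurable maximizing selector.

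\textbf{Self-map.} Fix $V\in\mathcal B(S)$. For each $g\in\mathcal G$, the map $s\mapsto\int V(s')Q(ds'\mid s,g)$ is measurable, because $Q$ is a well-defined (measurable) transition kernel under Assumption~\ref{ass:governance_primitives}. It is bounded by $\|V\|_\infty$. Since $r_G(\cdot,g)$ is bounded and measurable, each bracketed term is bounded and measurable. A maximum over the finite set $\mathcal G$ of finitely many measurable functions is measurable. Hence
\[
\|\mathcal T V\|_\infty\le \sup|r_G|+\gamma\|V\|_\infty,
\]
so $\mathcal T V\in\mathcal B(S)$. This is the step where finiteness of $\mathcal G$ matters most. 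With a general action set, measurability of a supremum over uncountably many actions would need selection theorems, and that would be the main obstacle. Here it is immediate.

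\textbf{Contraction.} For $V,W\in\mathcal B(S)$ and each $s$, use the elementary inequality $|\max_g a_g-\max_g b_g|\le\max_g|a_g-b_g|$. This gives
\[
|(\mathcal TV)(s)-(\mathcal TW)(s)|\le\gamma\max_{g}\int|V-W|(s')\,Q(ds'\mid s,g)\le\gamma\|V-W\|_\infty .
\]
Taking the supremum over $s$ yields $\|\mathcal TV-\mathcal TW\|_\infty\le\gamma\|V-W\|_\infty$ with $\gamma<1$. Since $\mathcal B(S)$ is complete, the Banach fixed point theorem gives a unique fixed point $V^\star=\mathcal T V^\star$. It also gives $\|V^\star\|_\infty\le\sup|r_G|/(1-\gamma)$.

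\textbf{Optimal stationary policy.} Because $\mathcal G$ is finite, the argmax in the statement is nonempty at every $s$. To make it measurable, use the total order of Assumption~\ref{ass:governance_primitives} and choose the smallest maximizing $g$ in that order. Each set $\{s:\pi^\star_G(s)=g\}$ is then a finite Boolean combination of measurable sets, defined by inequalities between finitely many measurable functions, so $\pi^\star_G$ is measurable.

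Finally, define the policy evaluation operator $\mathcal T_{\pi}$ for a stationary policy $\pi$. It is also a $\gamma$-contraction, and the choice of $\pi^\star_G$ gives $\mathcal T_{\pi^\star_G}V^\star=\mathcal TV^\star=V^\star$, so $V^\star$ is the value of $\pi^\star_G$. For an arbitrary (possibly history-dependent) policy, iterate the inequality that one Bellman step under any action is at most $\mathcal T$. Together with monotonicity of $\mathcal T$ and the vanishing tail $\gamma^n\|V\|_\infty\to0$, this shows the policy's value is at most $V^\star$. Hence $\pi^\star_G$ is optimal. This verification step is the most involved part of the argument, but it is the textbook argument (e.g.\ \citet{puterman1994mdp,bertsekas2017dynamic}), and we would cite it rather than reprove it in full.
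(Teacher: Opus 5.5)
Your proposal is correct and follows the same route as the paper. Both use the estimate $|\max_g a_g-\max_g b_g|\le\max_g|a_g-b_g|$ to obtain a $\gamma$-contraction in the supremum norm, Banach's fixed-point theorem for $V^\star$, and finiteness of $\mathcal G$ for attainment of the maximum. Your additional steps (the self-map and measurability check, the measurable tie-breaking selector, and the verification that the greedy policy attains $V^\star$ against arbitrary policies) fill in what the paper's proof only asserts, namely that ``the finite action set ensures that the maximum is attained, yielding an optimal stationary policy.''
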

\begin{proof}
Let $V$ and $W$ be bounded measurable functions. For any fixed state $s$, the elementary inequality $|\max_g a_g-\max_g b_g|\leq \max_g |a_g-b_g|$ gives
\[
| (\mathcal T V)(s)-(\mathcal T W)(s)|\leq \gamma \max_{g\in\mathcal G}\left|\int\{V(s')-W(s')\}Q(ds'\mid s,g)\right|.
\]
The absolute value of the integral is bounded by $\|V-W\|_\infty$, so $| (\mathcal T V)(s)-(\mathcal T W)(s)|\leq \gamma\|V-W\|_\infty$ for every $s$. Taking the supremum over $s$ yields $\|\mathcal T V-\mathcal T W\|_\infty\leq\gamma\|V-W\|_\infty$. Since $0<\gamma<1$, Banach's fixed-point theorem implies the existence and uniqueness of a bounded fixed point. The finite action set ensures that the maximum is attained, yielding an optimal stationary policy.\end{proof}
From an organizational perspective, Theorem~\ref{thm:existence} establishes that delegated decision authority is determined by a well-defined dynamic optimization problem rather than by an ad hoc scoring rule. The empirical sections therefore evaluate the behaviour of a policy with a fixed-point foundation in dynamic programming \citep{bellman1957dynamic,bertsekas2017dynamic,powell2022reinforcement}.

\begin{theorem}[Bayesian sufficiency for governance]
\label{thm:bayesian_sufficiency}
Suppose the latent state $X_t$ is Markov, AI evidence is conditionally incorporated through the filtering recursion, and the diagnostics contained in $(\BaR_t,\tau_t,\eta_t)$ are recursively updated from $(b_t,u_t^{AI},u_t^{Ref},c_t)$ and realized outcomes. Then the governance state $s_t^G$ is Markov, and the posterior belief $b_t=P(X_t\mid\mathcal F_t)$ is a sufficient statistic for the latent-state component of governance.
\end{theorem}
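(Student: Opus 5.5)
The plan is to prove the two claims in order. First, the posterior $b_t$ is a sufficient statistic for $X_t$ given $\mathcal F_t$. Second, the full governance state $s_t^G$ evolves as a controlled Markov process, with a transition kernel $Q(\cdot\mid s,g)$ that depends only on the current state and action.

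For sufficiency, I will argue by induction on $t$ that $b_t(i)=P(X_t=i\mid\mathcal F_t)$, where $\mathcal F_t$ is generated by the observation history. The step requires two standing assumptions. The first is a hidden-Markov structure: given $X_t$, the evidence $q_t$ (and $O_t$) is conditionally independent of the past. The second is that $q_t(i)$ is proportional to the likelihood of the current observation under $X_t=i$, which is the sense in which "AI evidence is conditionally incorporated through the filtering recursion." The induction step has two parts.
\begin{itemize}
\item \emph{Prediction.} Using the Markov property of $X_t$ and Equation~\eqref{eq:transition_model}, $P(X_t=i\mid\mathcal F_{t-1})=\sum_j P_{ji}\,b_{t-1}(j)=\Pi_t(i)$, which is Equation~\eqref{eq:predictive_belief}.
\item \emph{Correction.} Bayes' rule with the likelihood $q_t$ then gives exactly Equation~\eqref{eq:bayes_update}.
\end{itemize}
Consequently, for any bounded function $f$ of $X_t$ (or of $X_{t+1}$), $\E[f(X_t)\mid\mathcal F_t]=\sum_i f(i)b_t(i)$. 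Any conditional expectation of future latent-state quantities likewise factors through $b_t$. In particular, the expected utility terms in Equation~\eqref{eq:governance_utility} that depend on $X$ are functions of $b_t$. This is the standard belief-MDP reduction \citep{smallwood1973optimal,kaelbling1998}.

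For the Markov property, I will write $s_{t+1}^G$ as a measurable map $\Phi$ of three inputs: $s_t^G$, the action $g_t$, and the new exogenous innovations (the next observation $O_{t+1}$ and the realized outcome $R_{t+1}$). This is done component by component.
\begin{itemize}
\item $b_{t+1}$ is a function of $b_t$ and $q_{t+1}$ by the filtering recursion.
\item $\BaR_{t+1}=H_{t+1}(1+D_{t+1})C_{t+1}$ is a function of $(b_{t+1},b_t)$ by Equation~\eqref{eq:bar}.
\item $\tau_{t+1}$ and $\eta_{t+1}$ are recursive functions of the previous diagnostics, the current components, and the realized outcomes, by hypothesis.
\item $(u_{t+1}^{AI},c_{t+1})$ come from $\mathcal L_\theta(O_{t+1})$, and $u_{t+1}^{Ref}$ is produced by the reference process.
\end{itemize}
It then suffices that the conditional law of $(O_{t+1},R_{t+1})$ given the whole past $(\mathcal F_t,g_{0:t})$ depends only on $(b_t,g_t)$, possibly through $u_t^{Exec}$. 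This follows by summing over $X_t$ and $X_{t+1}$:
\[
P(O_{t+1}\in\cdot,R_{t+1}\in\cdot\mid\mathcal F_t,g_{0:t})=\sum_{i,j}b_t(i)P_{ij}\,P(O_{t+1}\in\cdot,R_{t+1}\in\cdot\mid X_{t+1}=j,u^{Exec}_t).
\]
Here $u_t^{Exec}$ is determined by $s_t^G$ and $g_t$ through Equation~\eqref{eq:executed_action}. Since $\Phi$ is deterministic given these innovations, the law of $s^G_{t+1}$ given the past equals its law given $(s_t^G,g_t)$. That is the Markov property, and it also defines the kernel $Q$ used in Theorem~\ref{thm:existence}.

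The main obstacle is making the implicit modeling assumptions precise enough for this computation to hold. Specifically, I need that $q_t$ acts as a genuine likelihood, and that outcomes and next observations depend on the past only through $X_{t+1}$ and the executed action. It also requires that $\tau_t$, $\eta_t$, and $u^{Ref}_t$ carry no hidden dependence on earlier history beyond what is stored in $s_t^G$. I will state these as explicit conditional-independence conditions at the start of the proof, reading them as the content of the hypotheses. I will also note that if the reference process itself has memory, its internal state must be included in $\eta_t$.
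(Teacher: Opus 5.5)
Your proposal is correct and follows essentially the same route as the paper. The paper's proof is the predictive identity $\E[h(X_{t+1})\mid\mathcal F_t]=\sum_i b_t(i)\sum_j P_{ij}h(j)$ plus an appeal to recursive updating of the diagnostics; you are more careful in that you prove by induction that the filter output equals $P(X_t\mid\mathcal F_t)$, which the paper simply asserts, and you carry the factorization through to the joint law of $(O_{t+1},R_{t+1})$, which is the step the paper skips when it moves from latent-state prediction to the law of the next governance state.

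One point to tighten when you write the conditions (the paper glosses over it too, since its appendix puts realized outcomes into $\mathcal F_t$). Because $\tau_t$ depends on past realized outcomes, your conditioning $\sigma$-field must contain $R_{1:t}$. You then also need outcomes to be conditionally uninformative about the latent state given observations and executed actions, for example $R_{t+1}$ a measurable function of $O_{t+1}$ and $u_t^{Exec}$. Your stated condition that outcomes depend on $X_{t+1}$ allows them to be informative, and then conditioning on them pulls the true posterior away from the $q$-only filter output $b_t$.
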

\begin{proof}
The Markov property gives $P(X_{t+1}\mid X_t,\mathcal F_t)=P(X_{t+1}\mid X_t)$. Conditional on the filtration, the posterior belief $b_t$ is the conditional distribution of $X_t$. For any bounded measurable function $h$,
\[
\E[h(X_{t+1})\mid\mathcal F_t]=\sum_i b_t(i)\sum_j P_{ij}h(j),
\]
so the predictive distribution of the latent state depends on the history only through $b_t$. Since the remaining diagnostics are updated recursively by assumption, the conditional distribution of the next governance state depends on the past only through $s_t^G$ and the selected governance action. Hence $s_t^G$ is Markov and $b_t$ is sufficient for latent-state prediction.\end{proof}
This result formalizes the separation between statistical inference and organizational action. Organizations need not carry the full observation history into the governance decision because posterior beliefs summarize the information required for sequential delegation.

\begin{proposition}[Reference consistency and admissibility]
\label{prop:reference_consistency}
For any governance action $g$, the executed recommendation satisfies
\[
u^{Exec}=\alpha(g)u^{AI}+(1-\alpha(g))u^{Ref}.
\]
If $\alpha(g)=0$, then $u^{Exec}=u^{Ref}$; if $\alpha(g)=1$, then $u^{Exec}=u^{AI}$. If $u^{AI},u^{Ref}\in\mathcal U$ and $\mathcal U$ is convex, then $u^{Exec}\in\mathcal U$ for every $g\in\mathcal G$.
\end{proposition}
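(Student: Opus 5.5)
The proof is a direct verification from the definition of the executed recommendation in Equation~\eqref{eq:executed_action}, together with the convexity hypothesis in Assumption~\ref{ass:governance_primitives}. The identity $u^{Exec}=\alpha(g)u^{AI}+(1-\alpha(g))u^{Ref}$ is not something to be derived. It is the governance mechanism itself, so the first step restates it as the definition of execution for an arbitrary $g\in\mathcal G$.

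For reference consistency, I substitute the two endpoint values of the delegation weight. Setting $\alpha(g)=0$ gives $u^{Exec}=0\cdot u^{AI}+1\cdot u^{Ref}=u^{Ref}$. Setting $\alpha(g)=1$ gives $u^{Exec}=u^{AI}+0\cdot u^{Ref}=u^{AI}$. Both computations use only the vector-space axioms: $0\cdot v=0$, $1\cdot v=v$, and the additive identity. These are available because Assumption~\ref{ass:governance_primitives} places $\mathcal U$ inside a normed vector space.

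For admissibility, I fix $g\in\mathcal G$ and write $\lambda=\alpha(g)$. By Assumption~\ref{ass:governance_primitives}, $\lambda\in[0,1]$, so $u^{Exec}=\lambda u^{AI}+(1-\lambda)u^{Ref}$ is a convex combination of two points of $\mathcal U$. The definition of convexity then places it in $\mathcal U$. Since $g$ was arbitrary, the conclusion holds for every $g\in\mathcal G$. Finiteness and ordering of $\mathcal G$ play no role here.

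There is no genuine obstacle. The only point needing care is that $\alpha$ takes values in $[0,1]$ for every action, including intermediate ones. Without this, the combination would be affine rather than convex and could leave $\mathcal U$. That range condition is supplied by Assumption~\ref{ass:governance_primitives}.
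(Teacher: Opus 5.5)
Your proposal is correct and matches the paper's proof: you handle the endpoint cases by direct substitution, and you get admissibility because $\alpha(g)\in[0,1]$ (Assumption~\ref{ass:governance_primitives}) makes $u^{Exec}$ a convex combination of two points of the convex set $\mathcal U$.
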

\begin{proof}
The limiting cases follow by direct substitution. For admissibility, $\alpha(g)\in[0,1]$ by Assumption~\ref{ass:governance_primitives}, so $u^{Exec}$ is a convex combination of two elements of $\mathcal U$. Convexity of $\mathcal U$ implies $u^{Exec}\in\mathcal U$.\end{proof}
This property is managerially important because the governance policy never creates an uncontrolled action outside the approved decision set. It reallocates authority between an AI recommendation and a reference process while preserving the feasibility constraints of the organization.

\begin{theorem}[Belief-at-Risk monotonicity]
\label{thm:bar_monotonicity}
Fix all state components except $\BaR$ and suppose that the Bellman objective has decreasing differences in $(\BaR,g)$ under the order induced by $\alpha(g)$. In particular, suppose the direct BaR penalty is $-\lambda_{BaR}\alpha(g)\BaR$ with $\lambda_{BaR}\geq0$ and the continuation term does not introduce offsetting increasing differences. Then the optimal delegation correspondence is nonincreasing in $\BaR$.
\end{theorem}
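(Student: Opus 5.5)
The plan is a direct application of Topkis' monotone comparative statics on a finite chain. The first step is to fix every component of the governance state except $\BaR$ and write the Bellman objective as
\[
F(\beta,g)=r_G(s_\beta,g)+\gamma\int V^\star(s')\,Q(ds'\mid s_\beta,g),
\]
where $s_\beta$ denotes the state whose Belief-at-Risk coordinate equals $\beta$. By Theorem~\ref{thm:existence}, $V^\star$ exists and is bounded, so $F$ is well defined. Because $\mathcal G$ is finite, the argmax set $A(\beta)=\arg\max_{g\in\mathcal G}F(\beta,g)$ is nonempty. Assumption~\ref{ass:governance_primitives} makes $\mathcal G$ a chain under $\preceq$. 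Actions with equal $\alpha$ are equivalent for the ordering, so I would work with the induced chain of distinct delegation levels.

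The second step is to verify decreasing differences. For $g\preceq g'$ and $\beta\le\beta'$, the requirement is
\[
F(\beta',g')-F(\beta',g)\le F(\beta,g')-F(\beta,g).
\]
I would split $F$ into three parts: the direct penalty $-\lambda_{BaR}\alpha(g)\beta$, the remaining one-period terms, and the continuation term. For the direct penalty, the difference across actions is $-\lambda_{BaR}(\alpha(g')-\alpha(g))\beta$, which is nonincreasing in $\beta$ because $\lambda_{BaR}\ge0$ and $\alpha(g')\ge\alpha(g)$. The remaining one-period terms ($\E_t[R]$, $Risk$, deviation, and intervention cost) do not depend on $\beta$ once the other components are fixed, so their differences are constant in $\beta$. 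The continuation term is exactly the part the hypothesis covers: by assumption it adds no offsetting increasing differences. Summing the three parts gives decreasing differences of $F$.

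The third step is the single-crossing argument itself. Suppose $\beta<\beta'$, $g\in A(\beta)$, and $g'\in A(\beta')$, with $g'\succ g$ strictly. Optimality gives $F(\beta',g')\ge F(\beta',g)$ and $F(\beta,g)\ge F(\beta,g')$. Combining these with decreasing differences forces equality throughout. It follows that $g\in A(\beta')$ and $g'\in A(\beta)$. Hence $\min A$ and $\max A$ are nonincreasing in $\beta$, and $A(\cdot)$ is nonincreasing in the strong set order. This is the meaning of ``optimal delegation correspondence is nonincreasing.'' If a single-valued selection is wanted, I would note that choosing the largest or smallest maximizer gives a monotone policy.

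The main obstacle is the continuation term. In general $\BaR_t$ affects $Q$ only through its recursive update, so decreasing differences of $\int V^\star\,dQ$ cannot be derived from the primitives alone. I will not try to prove it. Instead I will state clearly that the theorem's hypothesis is exactly what licenses this step, and remark that it holds trivially in the myopic case $\gamma\to0$ and whenever $Q$ does not depend on the current $\BaR$. The other mild point is handling ties in $\alpha$, which the reduction to distinct delegation levels in the first step takes care of.
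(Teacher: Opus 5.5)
Your proposal is correct and takes the same route as the paper. The direct penalty $-\lambda_{BaR}\alpha(g)\BaR$ supplies decreasing differences, the continuation term is covered by hypothesis, and monotonicity of the argmax on the finite ordered action set follows. You additionally write out the exchange step the paper only asserts (showing $g\in A(\beta')$ and $g'\in A(\beta)$, hence strong-set-order monotonicity and monotone extremal selections), which fills that step in rather than changing the approach.
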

\begin{proof}
For two actions $g_i\preceq g_j$, define $\Delta(\BaR)=Q(g_j,\BaR)-Q(g_i,\BaR)$, where $Q$ is the Bellman objective. The BaR component contributes
\[
-\lambda_{BaR}\{\alpha(g_j)-\alpha(g_i)\}\BaR,
\]
which is nonincreasing in $\BaR$ because $\lambda_{BaR}\geq0$ and $\alpha(g_j)\geq\alpha(g_i)$. By assumption, the remaining components do not reverse this decreasing-differences property. Therefore increases in $\BaR$ weakly reduce the relative attractiveness of higher-delegation actions. On a finite ordered action set, decreasing differences imply that the argmax correspondence cannot shift upward as $\BaR$ increases.\end{proof}
The theorem gives a formal statement of governance conservatism: rising governance uncertainty should not increase delegated authority to AI-generated recommendations. This is the theoretical basis for the BaR sweep and governance-appetite validation.

\begin{theorem}[Reported-confidence robustness]
\label{thm:confidence_robustness}
Suppose a perturbation changes only reported confidence $c_t$ while leaving $b_t$, $u_t^{AI}$, $u_t^{Ref}$, $\BaR_t$, $\tau_t$, $\eta_t$, the reward function, and the transition kernel unchanged. Then the optimal governance action correspondence and the set of optimal delegation weights are unchanged.
\end{theorem}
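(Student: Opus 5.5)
The argument is essentially an invariance statement for the fixed point of Theorem~\ref{thm:existence}. The plan is to compare two governance states that agree everywhere except in the confidence coordinate. Write them as $s=(b,u^{AI},u^{Ref},\BaR,c,\tau,\eta)$ and $\tilde s=(b,u^{AI},u^{Ref},\BaR,\tilde c,\tau,\eta)$. I will read the hypothesis that "the reward function and the transition kernel are unchanged" in its operative sense. That sense is $r_G(s,g)=r_G(\tilde s,g)$ and $Q(\cdot\mid s,g)=Q(\cdot\mid\tilde s,g)$ for every $g\in\mathcal G$: neither primitive depends on $c$ except through the listed components. This reading is the one substantive interpretive step. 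The governance utility in Equation~\eqref{eq:governance_utility} contains no $c_t$ term, so the reward part is automatic. The kernel part is where the assumption genuinely bites, and I would state it explicitly. The reason is that the next-period state contains $c_{t+1}$, and the hypothesis requires that the current $c_t$ not shift its law, nor the law of the other components.

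\textbf{Step 1.} Show that the value function does not depend on $c$. Let $\mathcal B_0$ be the closed subspace of bounded measurable functions satisfying $V(s)=V(\tilde s)$ whenever $s,\tilde s$ differ only in confidence. I would check that $\mathcal T$ from Theorem~\ref{thm:existence} maps $\mathcal B_0$ into itself. Indeed, for $V\in\mathcal B_0$ both the reward and the integral $\int V(s')Q(ds'\mid s,g)$ coincide at $s$ and $\tilde s$ for every $g$, because the kernels are equal. Hence the maxima over $g$ coincide. Since $\mathcal B_0$ is closed in the sup norm and $\mathcal T$ is a contraction, the unique fixed point $V^\star$ lies in $\mathcal B_0$. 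Concretely, start value iteration at $V_0\equiv0$; every iterate stays in $\mathcal B_0$, and uniform limits preserve the invariance. Strictly speaking, this step is not even necessary. Equality of kernels already gives equal continuation values $\int V^\star dQ$ at $s$ and $\tilde s$ for any $V^\star$. Including it nonetheless makes explicit that the value itself is confidence-invariant.

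\textbf{Step 2.} Conclude that the action-value functions coincide: $Q^\star(s,g)=r_G(s,g)+\gamma\int V^\star dQ(\cdot\mid s,g)$ equals $Q^\star(\tilde s,g)$ for every $g\in\mathcal G$. The argmax correspondences over the finite set $\mathcal G$ are therefore identical, so the set of optimal actions is unchanged. Applying $\alpha$ to equal sets gives equal sets of optimal delegation weights $\{\alpha(g):g\in\arg\max\}$. By Proposition~\ref{prop:reference_consistency}, the corresponding executed recommendations are also unchanged, since $u^{AI}$ and $u^{Ref}$ are fixed.

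The main obstacle is the kernel condition and how to phrase it rigorously, since the argument is otherwise immediate. If one wants to allow the perturbation to persist, so that future confidences are also perturbed, I would instead consider the map $\phi$ that modifies only the confidence coordinate. The requirement would then be that $r_G\circ\phi=r_G$ and that $Q$ be equivariant under $\phi$. The same invariant-subspace argument from Step 1 then goes through unchanged.
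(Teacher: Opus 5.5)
Your proposal is correct and takes essentially the same route as the paper. Because the reward and the transition kernel coincide at the original and perturbed states, every action value $r_G(s,g)+\gamma\int V^\star(s')Q(ds'\mid s,g)$ is unchanged, so the argmax over the finite set $\mathcal G$, and hence the set of optimal $\alpha(g)$, is identical. Your Step~1 (confidence-invariance of $V^\star$) is, as you note yourself, not needed, and your explicit reading of ``kernel unchanged'' as $Q(\cdot\mid s,g)=Q(\cdot\mid\tilde s,g)$ is exactly what the paper's one-line continuation-value argument uses implicitly.
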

\begin{proof}
Under the stated perturbation, every term in the Bellman objective is identical for each action $g\in\mathcal G$. The one-period utility is unchanged because the variables that enter reward, risk, BaR, deviation, and intervention costs are fixed. The continuation value is unchanged because the transition kernel is fixed. Thus the vector of action values is identical before and after the confidence perturbation. Identical action-value vectors have identical argmax correspondences, and therefore identical optimal delegation-weight sets.\end{proof}
This theorem does not claim that confidence is always irrelevant. If reported confidence changes beliefs, recommendations, reliability diagnostics, or transition assessments, it may affect governance. The result establishes the narrower and managerially important principle that confidence declarations alone should not mechanically alter organizational trust.

\begin{theorem}[Adaptive delegation under improving evidence]
\label{thm:adaptive_delegation}
Consider two evidence environments $e_0$ and $e_1$ with the same reported confidence and intervention costs. If moving from $e_0$ to $e_1$ weakly increases the incremental value of higher-delegation actions relative to lower-delegation actions, then the optimal delegation correspondence under $e_1$ is weakly higher than under $e_0$.
\end{theorem}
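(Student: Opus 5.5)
The plan is to treat the evidence environment as an ordered parameter $e\in\{e_0,e_1\}$ with $e_0\preceq e_1$, and to apply a finite-action monotone comparative statics argument of the same type used in Theorem~\ref{thm:bar_monotonicity}, with the inequality reversed. Fix a governance state and write $\mathcal Q(g,e)=r_G(s,g;e)+\gamma\int V^\star_e(s')Q_e(ds'\mid s,g)$ for the Bellman objective under environment $e$. By Theorem~\ref{thm:existence} each $V^\star_e$ exists and is unique, and the argmax set $\mathcal G^\star(e)$ is nonempty because $\mathcal G$ is finite. Since reported confidence and intervention costs are common to both environments, Theorem~\ref{thm:confidence_robustness} and the identical $C(g)$ term mean that neither can produce a difference in action values. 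Any change in the argmax must therefore come from the evidence-dependent components.

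The hypothesis that moving to $e_1$ weakly increases the incremental value of higher-delegation actions should be formalized as increasing differences. For every pair $g_i\preceq g_j$ in the order of Assumption~\ref{ass:governance_primitives}, we require
\[
\mathcal Q(g_j,e_1)-\mathcal Q(g_i,e_1)\;\geq\;\mathcal Q(g_j,e_0)-\mathcal Q(g_i,e_0).
\]
This is the statement I would work from. It includes the continuation values $V^\star_{e}$, so no separate sign condition on the dynamic part is needed.

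The core step is the standard Topkis-style exchange argument on a totally ordered finite set. Take $g_0\in\mathcal G^\star(e_0)$ and $g_1\in\mathcal G^\star(e_1)$. Suppose $g_1\prec g_0$ strictly. Optimality of $g_0$ at $e_0$ gives $\mathcal Q(g_0,e_0)-\mathcal Q(g_1,e_0)\geq0$. Increasing differences then give $\mathcal Q(g_0,e_1)-\mathcal Q(g_1,e_1)\geq0$, so $g_0\in\mathcal G^\star(e_1)$. By the same inequality read in reverse, together with optimality of $g_1$ at $e_1$, we get $g_1\in\mathcal G^\star(e_0)$. Hence $\max(g_0,g_1)\in\mathcal G^\star(e_1)$ and $\min(g_0,g_1)\in\mathcal G^\star(e_0)$. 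This is the strong set order $\mathcal G^\star(e_0)\sqsubseteq\mathcal G^\star(e_1)$. Its consequences are that the largest and smallest optimal delegation weights $\alpha(g)$ are weakly higher under $e_1$, and that the greatest selections satisfy $\alpha(\max\mathcal G^\star(e_1))\ge\alpha(\max\mathcal G^\star(e_0))$.

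The main obstacle is conceptual rather than computational. The hypothesis must be placed on the full Bellman objective, continuation values included. If it were placed only on the one-period utility, one would have to show that $V^\star_{e}$ preserves increasing differences through the fixed point. That would require extra structure, such as kernels that are stochastically monotone in $\alpha(g)$ and $e$, together with induction on value iteration using the contraction from Theorem~\ref{thm:existence}. I would state the theorem under the full-objective interpretation, matching the assumption style of Theorem~\ref{thm:bar_monotonicity}. I would then note in a remark that the one-period version follows by value-iteration induction under such monotonicity conditions. A secondary point is that ties make the comparison one of sets; I would therefore make explicit that ``weakly higher'' means the strong set order, or equivalently comparison of the extremal selections.
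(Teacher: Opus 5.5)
Your proposal is correct and matches the paper's proof: the paper also reads the hypothesis as increasing differences of the full Bellman objective $Q_e(g)$, continuation values included, over the ordered action set, and concludes by citing Topkis-style monotone comparative statics. Your explicit exchange argument and strong-set-order reading of ``weakly higher'' simply spell out the step the paper cites; the appeal to Theorem~\ref{thm:confidence_robustness} is superfluous, since that result concerns varying confidence with everything else fixed, but nothing in your argument depends on it.
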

\begin{proof}
Let $Q_e(g)$ denote the Bellman objective under evidence environment $e$. The assumption states that for any $g_i\preceq g_j$, $Q_{e_1}(g_j)-Q_{e_1}(g_i)\geq Q_{e_0}(g_j)-Q_{e_0}(g_i)$. This is increasing differences in $(e,g)$ on a finite ordered action set. Standard monotone-choice reasoning \citep{topkis1998supermodularity} implies that the set of maximizers under $e_1$ is weakly higher than the set of maximizers under $e_0$. Therefore delegated authority increases weakly when evidence quality improves in the stated sense.\end{proof}
The theorem formalizes the adaptive-delegation intuition tested in the forecast-accuracy validation: when evidence becomes more informative while reported confidence is held fixed, organizations should allocate more authority to AI-generated recommendations.

\begin{table}[H]
\centering
\small
\caption{Roadmap linking structural properties to validation experiments. The table shows how each formal property is paired with a corresponding empirical test, ensuring that the theoretical analysis and validation methodology evaluate the same governance behaviour.}
\label{tab:validation_roadmap}
\resizebox{\textwidth}{!}{%
\begin{tabular}{lll}
\toprule
Structural property & Managerial interpretation & Validation evidence \\
\midrule
Existence of optimal policy & Delegation is a well-defined dynamic optimization problem & Bellman implementation \\
Bayesian sufficiency & Decisions depend on posterior information, not raw histories & Belief-state recovery \\
Reference consistency and admissibility & Organizations retain a feasible reference fallback & Stress tests and bad-AI scenarios \\
BaR monotonicity & Delegation declines as governance uncertainty rises & BaR sweep and appetite sensitivity \\
Reported-confidence robustness & Confidence is not organizational trust & Confidence-only perturbation \\
Adaptive delegation & Authority rises with Bayesian evidence quality & Forecast-accuracy validation \\
Governance calibration & Risk appetite is represented by policy parameters & Governance-appetite sensitivity \\
Early warning & Governance risk rises before adverse outcomes & Fragile-AI early-warning experiment \\
\bottomrule
\end{tabular}%
}
\end{table}

\section{Parameterization and Governance Appetite}
\label{sec:parameterization}

The governance policy contains parameters that translate organizational objectives into quantitative delegation decisions. These parameters should not be interpreted as arbitrary tuning constants. They represent managerial choices about risk appetite, tolerance for deviation from independently validated processes, and the operational cost of governance intervention. This distinction is important because organizations differ in regulatory exposure, strategic objectives, and willingness to delegate authority to AI. A useful governance framework should therefore separate model design from policy calibration.

The most important parameter in the empirical analysis is the Belief-at-Risk penalty $\lambda_{BaR}$. Increasing this parameter raises the cost of delegating authority to AI under uncertain, unstable, or adverse posterior beliefs. A conservative organization can therefore reduce AI delegation by increasing $\lambda_{BaR}$ without changing the AI model, the reference process, or the Bayesian filter. Conversely, an organization seeking greater use of AI-generated recommendations may reduce this penalty while preserving the same governance architecture. This is managerially useful because changes in risk appetite can be implemented through transparent policy parameters rather than through ad hoc changes to approval workflows.

The remaining parameters have analogous interpretations. The risk penalty $\lambda_{risk}$ reflects the organization's aversion to financial or operational risk. The deviation penalty $\lambda_{dev}$ reflects the cost of departing from independently validated reference decisions. The intervention penalty $\lambda_{int}$ reflects the operational burden associated with review, escalation, monitoring, or rejection. Together these parameters provide a compact representation of governance appetite that can be reviewed, audited, and calibrated over time. The validation experiments in later sections examine whether changing these parameters produces interpretable changes in delegated authority, which is a necessary property for practical governance deployment.

\section{Validation Framework}
\label{sec:experimental_design}

The objective of the empirical analysis is not to determine whether a particular AI model outperforms alternative forecasting methods, nor to identify a portfolio strategy with superior historical performance. The experiments answer a different question at the heart of organizational AI governance: does the governance policy exhibit the structural properties required for dynamically allocating decision authority under uncertainty? This distinction is important because governance is a decision policy rather than a forecasting model. The empirical objective is therefore to validate the behaviour of the governance policy itself: how it allocates authority, responds to uncertainty, preserves a reference fallback, and adapts when evidence quality changes.

The validation framework mirrors the theoretical development in Sections~\ref{sec:design_principles}--\ref{sec:theory}. Each experiment isolates one structural property while holding the remaining components approximately fixed. Rather than evaluating governance solely through aggregate performance statistics, the experiments examine whether the policy behaves consistently with the decision-theoretic principles on which it is constructed. In particular, the experiments investigate whether governance degrades gracefully when AI behaviour deteriorates, whether delegation remains robust to confidence-only perturbations, whether delegated authority responds appropriately to changes in Bayesian evidence quality, whether institutional conservatism can be calibrated continuously through the governance objective, and whether the proposed sequential policy adds value relative to simpler delegation mechanisms.

The empirical design is organized around five complementary tests. The first four tests validate structural properties of the proposed Governance-Aware POMDP. The fifth test benchmarks the proposed policy against simpler governance strategies under a common information set and a common governance objective. This distinction is important: the structural experiments ask whether the proposed policy behaves as theory predicts, while the benchmark asks whether sequential Bayesian governance provides organizational value relative to alternative governance rules that an organization might plausibly implement.

\subsection{Stress Robustness}
\label{subsec:stress_robustness}

The first experiment evaluates the governance policy across four representative AI operating conditions: Low Reliability, Noisy, Improving, and High Reliability. These scenarios are not equally spaced points on a one-dimensional quality scale. They capture qualitatively different forms of AI behaviour that organizations may encounter during deployment, including degraded probabilistic evidence, unstable recommendations, improving performance through learning, and consistently reliable AI. The primary diagnostics are adaptive delegation weight, AI rejection rate, Belief-at-Risk, and changes in portfolio outcomes relative to both the reference process and the unguided AI system. From a managerial perspective, the experiment asks whether governance continues to function appropriately when AI systems become unreliable rather than assuming that AI capability remains fixed.

\subsection{Reported LLM-Confidence Robustness}
\label{subsec:reported_confidence_robustness}

Many AI systems produce confidence estimates together with recommended actions. Such confidence values are often interpreted as indicators of reliability, even though they need not be statistically calibrated. The second experiment therefore investigates whether governance responds directly to reported confidence or whether delegation is determined primarily by externally validated Bayesian evidence. To isolate confidence, the reported confidence field is varied continuously while Bayesian beliefs, AI recommendations, and Belief-at-Risk remain approximately unchanged. If governance is functioning as intended, confidence-only perturbations should produce little or no change in delegated authority because confidence is treated as supplementary information rather than as a direct measure of organizational trust.

\subsection{Forecast Accuracy Validation}
\label{subsec:forecast_accuracy_validation}

The third experiment isolates Bayesian evidence quality while approximately holding reported confidence constant. AI evidence is progressively interpolated between degraded and original belief vectors, and AI recommendations are perturbed analogously. This design separates informational quality from subjective confidence and evaluates whether governance responds to the quantity that should matter organizationally: the quality of the available evidence. Adaptive delegation is quantified using delegated authority, governance elasticity, regression slope, coefficient of determination, and rank correlation. Unlike the stress tests, this experiment provides a continuous characterization of governance behaviour as evidence quality improves.

\subsection{Governance Appetite Sensitivity}
\label{subsec:governance_appetite_sensitivity}

Organizations differ substantially in their tolerance for uncertainty and operational risk. Rather than requiring different governance architectures for different institutions, the proposed framework represents organizational conservatism through the Belief-at-Risk penalty in the governance utility. The calibration experiment varies $\lambda_{BaR}$ while holding the AI outputs and market observations fixed. Increasing this penalty should reduce delegated authority and increase intervention, while reducing it should permit greater reliance on AI-generated recommendations when evidence quality is sufficient. This experiment demonstrates that governance appetite can be calibrated continuously without redesigning the policy itself.

\subsection{Benchmark Governance Policies}

The preceding experiments evaluate whether the proposed governance policy satisfies its intended structural properties. A complementary question is whether sequential Bayesian governance adds value relative to simpler governance mechanisms that organizations might plausibly deploy. The benchmark experiment therefore compares the Governance-Aware POMDP with five alternative policies: Static Delegation, Confidence Threshold, Reliability-Only Delegation, Bayesian Shrinkage, and SR11-7 Style Governance.

The benchmark design deliberately holds the informational environment fixed. All policies observe the same simulated market paths, AI evidence, Bayesian posterior beliefs, reliability updates, Belief-at-Risk diagnostics, and reference recommendations. The policies differ only in the rule used to select delegated AI decision exposure. After the exposure decision is chosen, every policy is evaluated using the same common governance objective. This design ensures that performance differences reflect governance-policy structure rather than differences in information, filtering, or utility specification.

The benchmark is evaluated using two complementary criteria. Investment performance is summarized using governed Sharpe ratio and drawdown. Governance performance is summarized using the common governance utility, which rewards validated AI value but penalizes delegated AI authority under elevated model risk, adverse-state probability, switching costs, and intervention costs. This separation is important because the proposed framework is a governance mechanism, not a portfolio optimizer. A simple policy may perform well in one stationary scenario, but the central question is whether it provides robust decision-authority allocation across heterogeneous AI-quality regimes.

The benchmark also clarifies the role of specialization. Bayesian Shrinkage is expected to perform well when AI quality is persistently poor because aggressive contraction toward the reference process is appropriate in that limiting case. Confidence-threshold and reliability-only policies are expected to perform well when their triggering variables are well calibrated. Static delegation provides a simple nonadaptive baseline. The Governance-Aware POMDP is expected to be most valuable when AI quality is uncertain, unstable, or evolving, because it uses sequential Bayesian governance rather than a fixed rule for delegated authority. The benchmark therefore evaluates not only whether the proposed policy performs well, but when its additional sequential structure is managerially justified.

Together, these experiments validate governance as an organizational decision policy rather than evaluating a prediction model alone. Stress tests examine robustness across representative operating conditions; the reported-confidence experiment distinguishes organizational trust from AI confidence; the forecast-accuracy experiment evaluates adaptive delegation under changing evidence quality; the governance-appetite experiment demonstrates continuous calibration of institutional conservatism; and the benchmark comparison evaluates whether sequential Bayesian governance provides robust general-purpose performance relative to simpler delegation strategies.

\section{Synthetic Governance Laboratory}
\label{sec:synthetic_validation}

The synthetic governance laboratory provides a controlled environment in which the structural properties of the governance policy can be evaluated under known latent regimes and deliberately perturbed AI behaviour. This setting is useful because it allows the experiments to distinguish behaviour induced by the governance policy from behaviour induced by uncontrolled features of historical data. The objective is not to demonstrate a trading strategy but to evaluate whether the policy reallocates delegated authority in ways that are mathematically and managerially coherent. The section proceeds in two stages. First, it validates the structural properties of adaptive delegation under controlled perturbations. Second, it compares the Governance-Aware POMDP with simpler governance policies under identical information, beliefs, simulated paths, and governance objectives.

\subsection{Stress Robustness}
The stress-test experiments evaluate governance across representative AI operating conditions. Figure~\ref{fig:adaptive_trust_allocation} reports delegated authority, rejection, Belief-at-Risk, and performance diagnostics across low-reliability, noisy, improving, and high-reliability scenarios. The governance policy allocates lower authority to AI when behaviour deteriorates and increases reliance as evidence quality improves. Intervention rates rise under degraded AI conditions and decline when AI evidence becomes more reliable. From an organizational perspective, this indicates that governance can protect against deteriorating AI behaviour without permanently suppressing the value of AI-generated intelligence.

\begin{figure}[H]
\centering
\IfFileExists{figures/adaptive_trust_allocation_five_panel.png}{\includegraphics[width=0.95\textwidth]{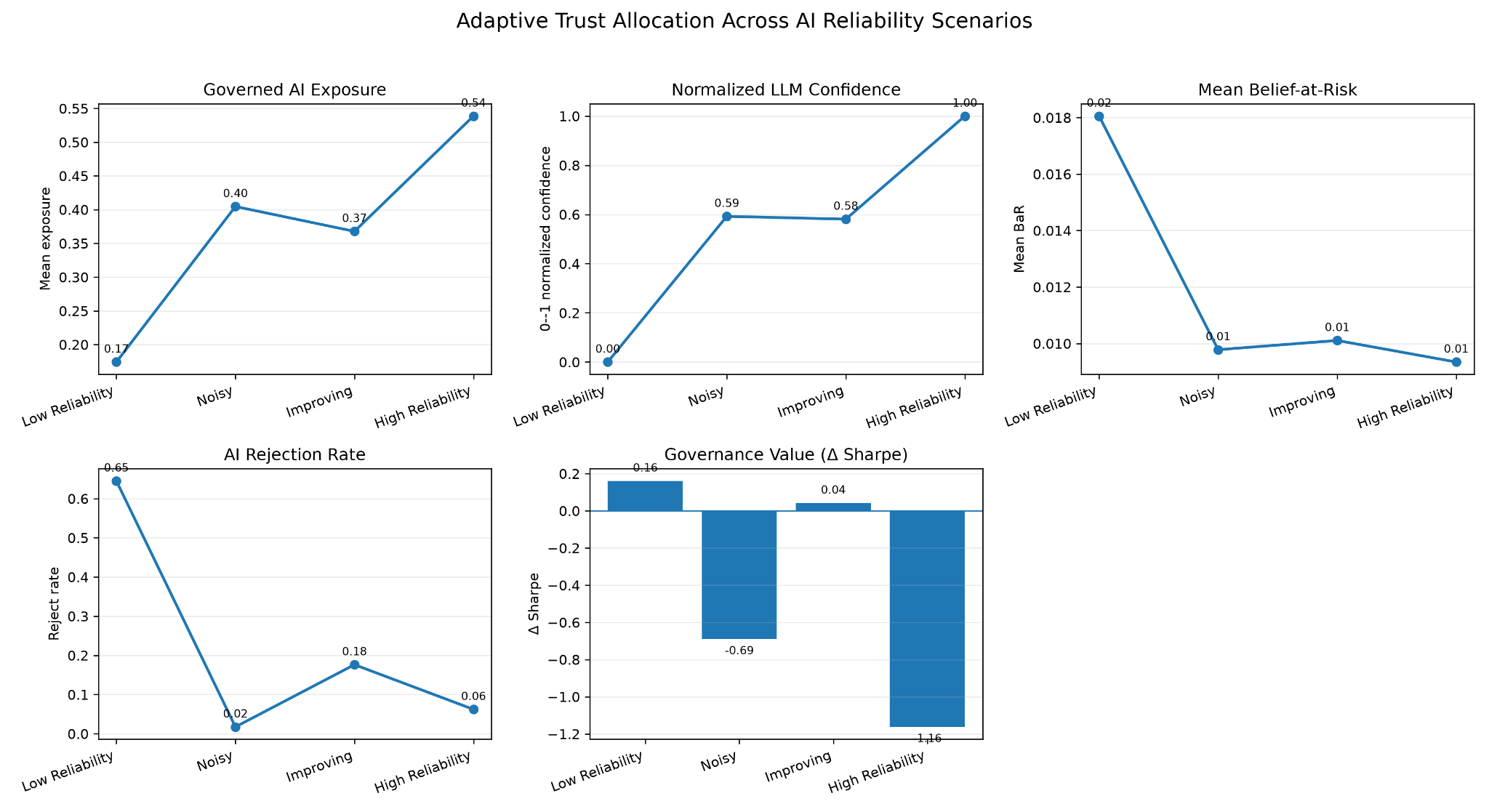}}{\fbox{Missing figure: adaptive trust allocation.}}
\caption{Adaptive delegation across representative AI operating conditions. The figure summarizes how delegated decision authority, rejection, Belief-at-Risk, and performance diagnostics change across degraded, noisy, improving, and reliable AI scenarios. The main empirical message is not that every scenario lies on a single monotone quality curve, but that the governance policy reduces delegated authority when evidence deteriorates while preserving the ability to increase delegation when AI-generated evidence becomes more reliable.}
\label{fig:adaptive_trust_allocation}
\end{figure}

\subsection{Reported LLM-Confidence Robustness}
The reported-confidence robustness experiment isolates the effect of the AI's self-reported confidence while holding posterior beliefs and recommendations fixed. Figure~\ref{fig:confidence_robustness} shows that delegated authority, rejection, and BaR remain effectively unchanged as reported confidence varies. This result validates the proposition that confidence is not organizational trust. The governance policy does not mechanically follow confidence declarations because confidence does not alter Bayesian evidence, recommendations, or governance risk in this experiment.

\begin{figure}[H]
\centering
\IfFileExists{figures/reported_confidence_robustness.png}{\includegraphics[width=0.94\textwidth]{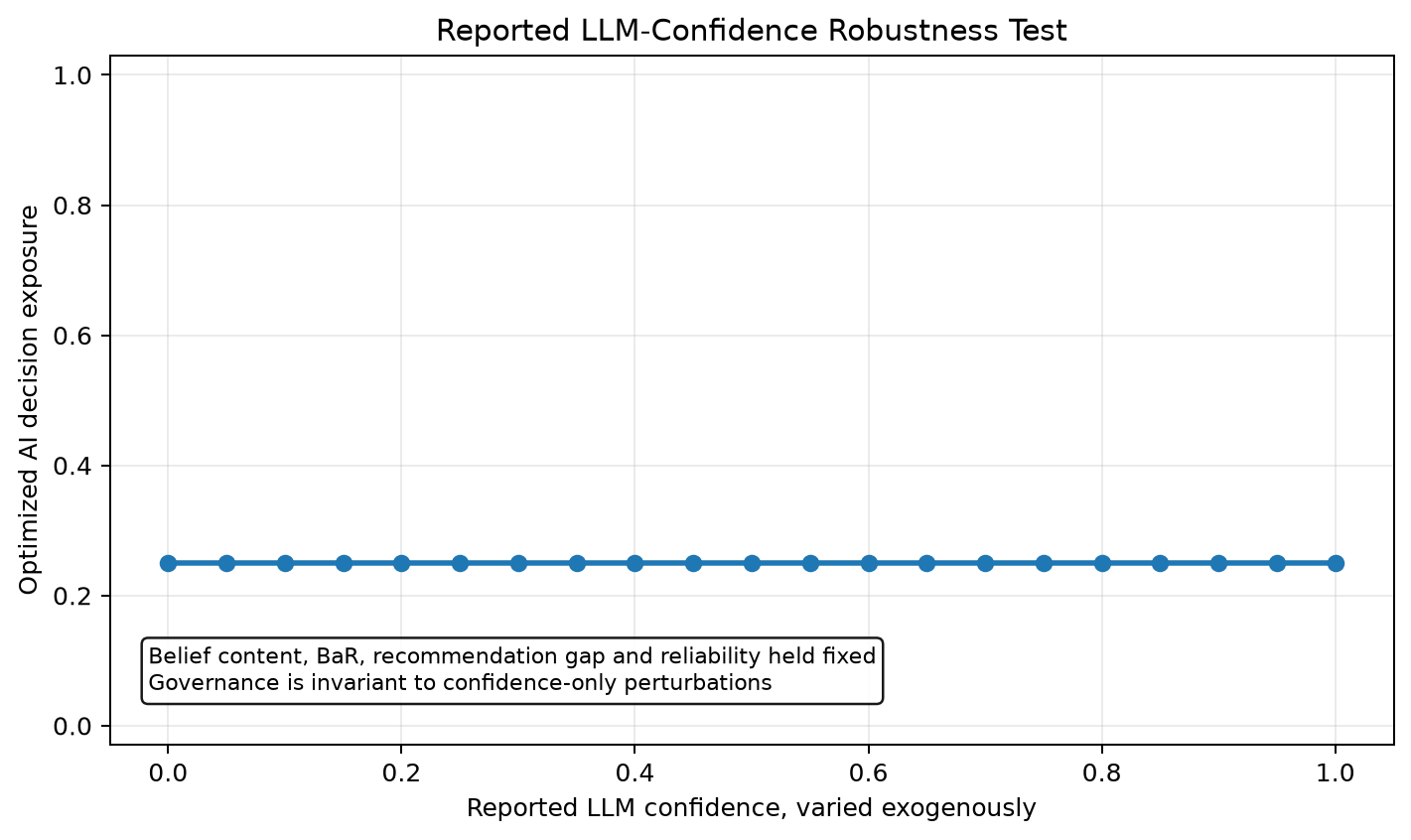}}{\fbox{Missing reported confidence robustness figure.}}
\caption{Reported LLM-confidence robustness. Bayesian beliefs, AI recommendations, and Belief-at-Risk are held fixed while reported confidence is perturbed. The approximately constant governance response demonstrates that delegated authority is not mechanically determined by self-reported AI confidence, which is important for organizations using models whose confidence estimates may be poorly calibrated.}
\label{fig:confidence_robustness}
\end{figure}

\subsection{Forecast Accuracy Validation}
The forecast-accuracy validation experiment improves the content of AI-generated evidence while holding reported confidence approximately fixed. Figure~\ref{fig:forecast_accuracy} shows that delegated authority increases as evidence quality improves, while Belief-at-Risk decreases. This experiment provides the primary continuous validation of adaptive delegation: the organization assigns more authority to AI because evidence becomes more informative, not because the AI becomes more confident in its own output.

\begin{figure}[H]
\centering
\IfFileExists{figures/forecast_accuracy_validation.png}{\includegraphics[width=0.94\textwidth]{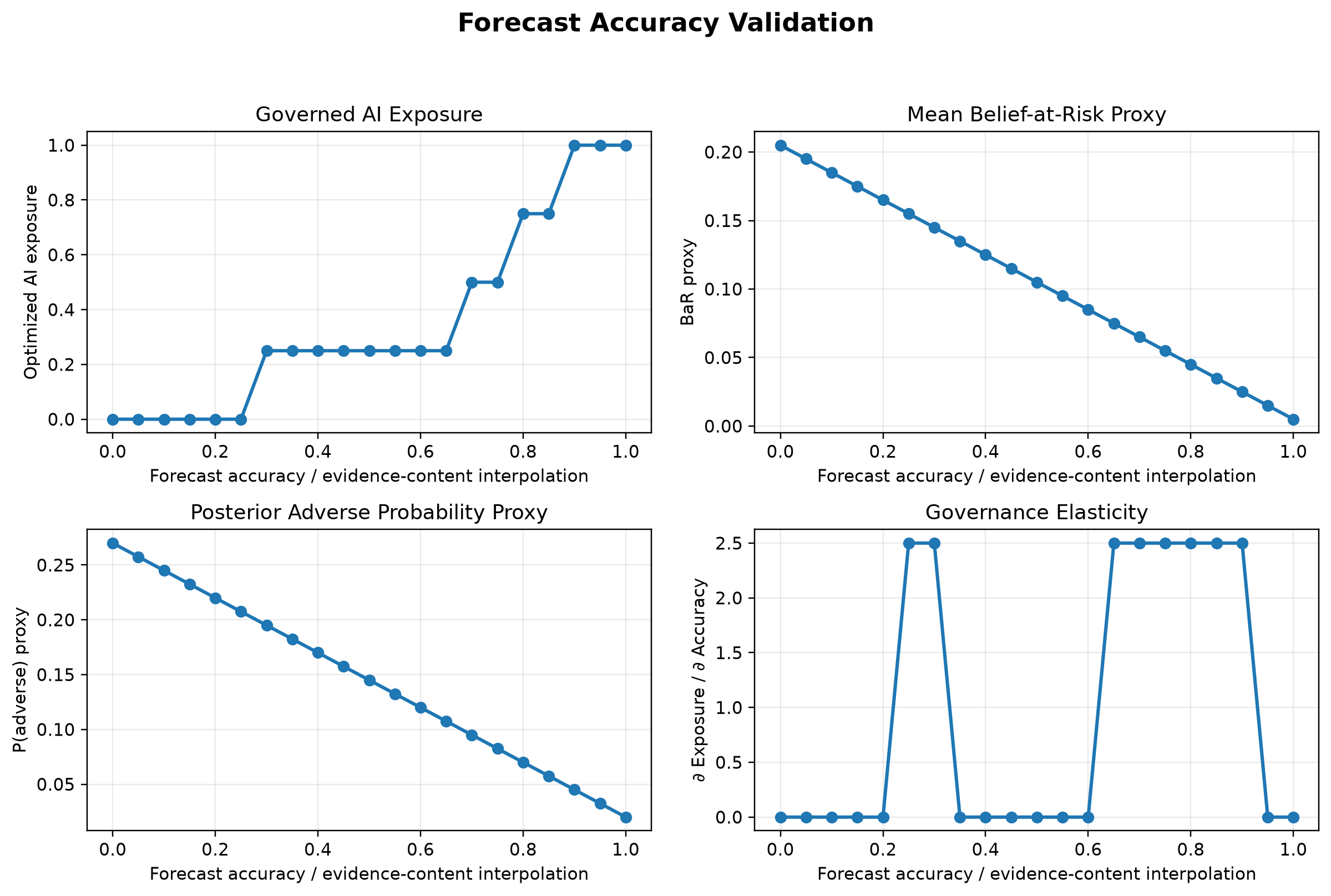}}{\fbox{Missing forecast accuracy validation figure.}}
\caption{Forecast-accuracy validation. AI evidence quality and recommendation content are progressively improved while reported confidence is held approximately constant. Delegated decision authority rises as Bayesian evidence becomes more informative, providing empirical support for adaptive delegation based on evidence quality rather than confidence declarations.}
\label{fig:forecast_accuracy}
\end{figure}

\begin{figure}[H]
\centering
\IfFileExists{figures/reported_confidence_vs_accuracy_validation.png}{\includegraphics[width=0.95\textwidth]{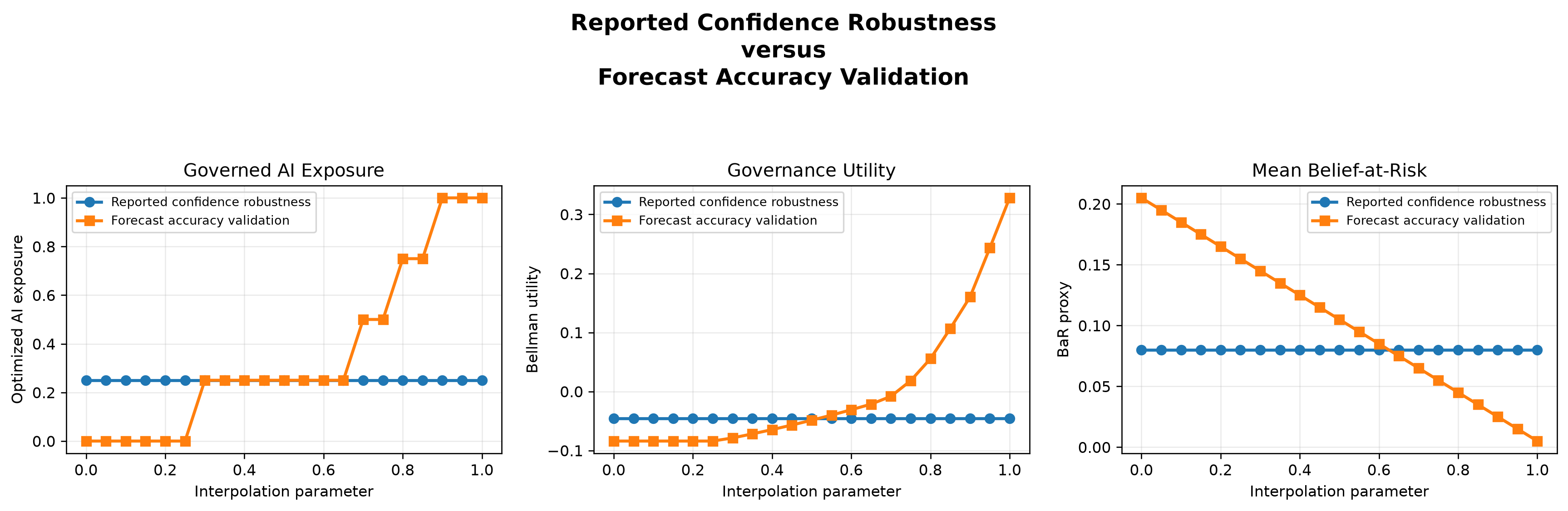}}{\fbox{Missing comparison figure.}}
\caption{Reported confidence robustness versus forecast-accuracy validation. Confidence-only perturbations leave the governance policy largely unchanged, whereas perturbations to Bayesian evidence quality produce systematic changes in delegated authority, rejection, and Belief-at-Risk. The contrast distinguishes AI confidence from organizational trust and supports the design principle that governance should rely primarily on validated evidence quality.}
\label{fig:confidence_vs_accuracy}
\end{figure}

\subsection{Governance Appetite Sensitivity}
The governance-appetite sensitivity experiment varies the Belief-at-Risk penalty while holding the rest of the model fixed. Figure~\ref{fig:governance_appetite} shows that increasing this penalty reduces delegated authority and produces more conservative governance behaviour. The result is managerially important because it demonstrates that institutional conservatism can be calibrated through an interpretable policy parameter rather than through a collection of manually specified approval rules.

\begin{figure}[H]
\centering
\IfFileExists{figures/governance_appetite_sensitivity.png}{\includegraphics[width=0.94\textwidth]{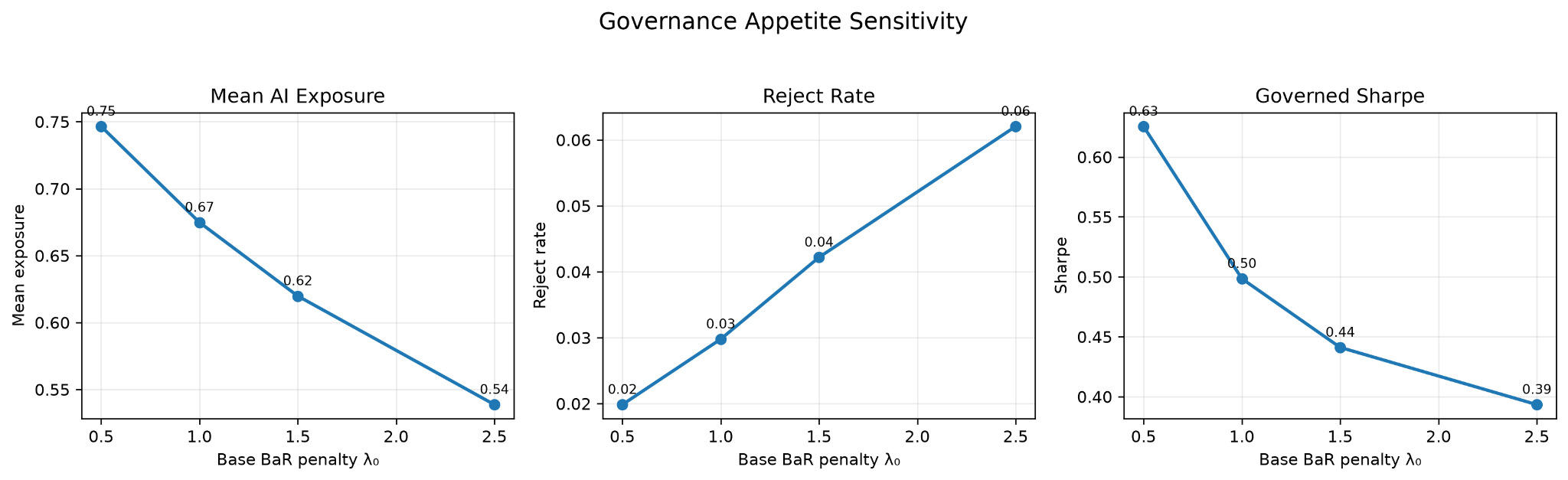}}{\fbox{Missing governance appetite sensitivity figure.}}
\caption{Governance-appetite sensitivity. Increasing the Belief-at-Risk penalty reduces delegated authority and increases governance conservatism without changing the underlying AI outputs or model structure. The figure illustrates how institutions can calibrate AI delegation through a transparent governance appetite parameter.}
\label{fig:governance_appetite}
\end{figure}

\subsection{Fragile AI Early Warning}
The fragile-AI experiment is included in the main text because it visually connects the full governance mechanism. Figure~\ref{fig:fragile_early_warning} shows hidden regimes, statistical reliability, posterior adverse-state probability, Belief-at-Risk, delegated authority, and portfolio drawdowns on a common time axis. The figure illustrates why governance requires more than a performance metric: reliability, beliefs, and BaR begin to deteriorate before the worst portfolio losses emerge, while delegated authority declines as governance risk rises. This provides a concrete example of the governance policy functioning as an early-warning and authority-allocation mechanism rather than merely as an ex post performance filter.

\begin{figure}[H]
\centering
\IfFileExists{figures/fragile_ai_failure_04_early_warning_with_bar.png}{\includegraphics[width=0.99\textwidth]{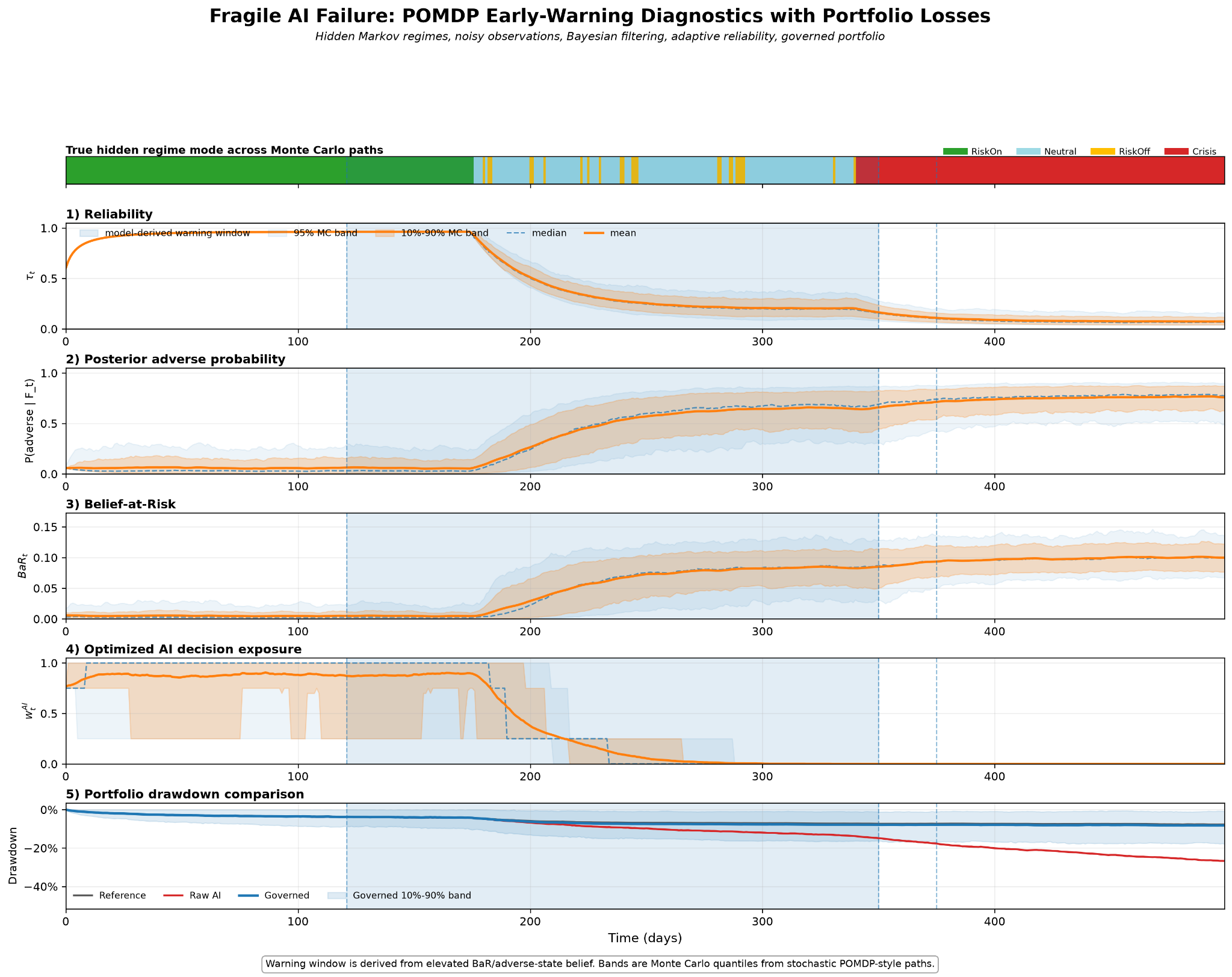}}{\IfFileExists{figures/fragile_ai_failure_04_early_warning.png}{\includegraphics[width=0.99\textwidth]{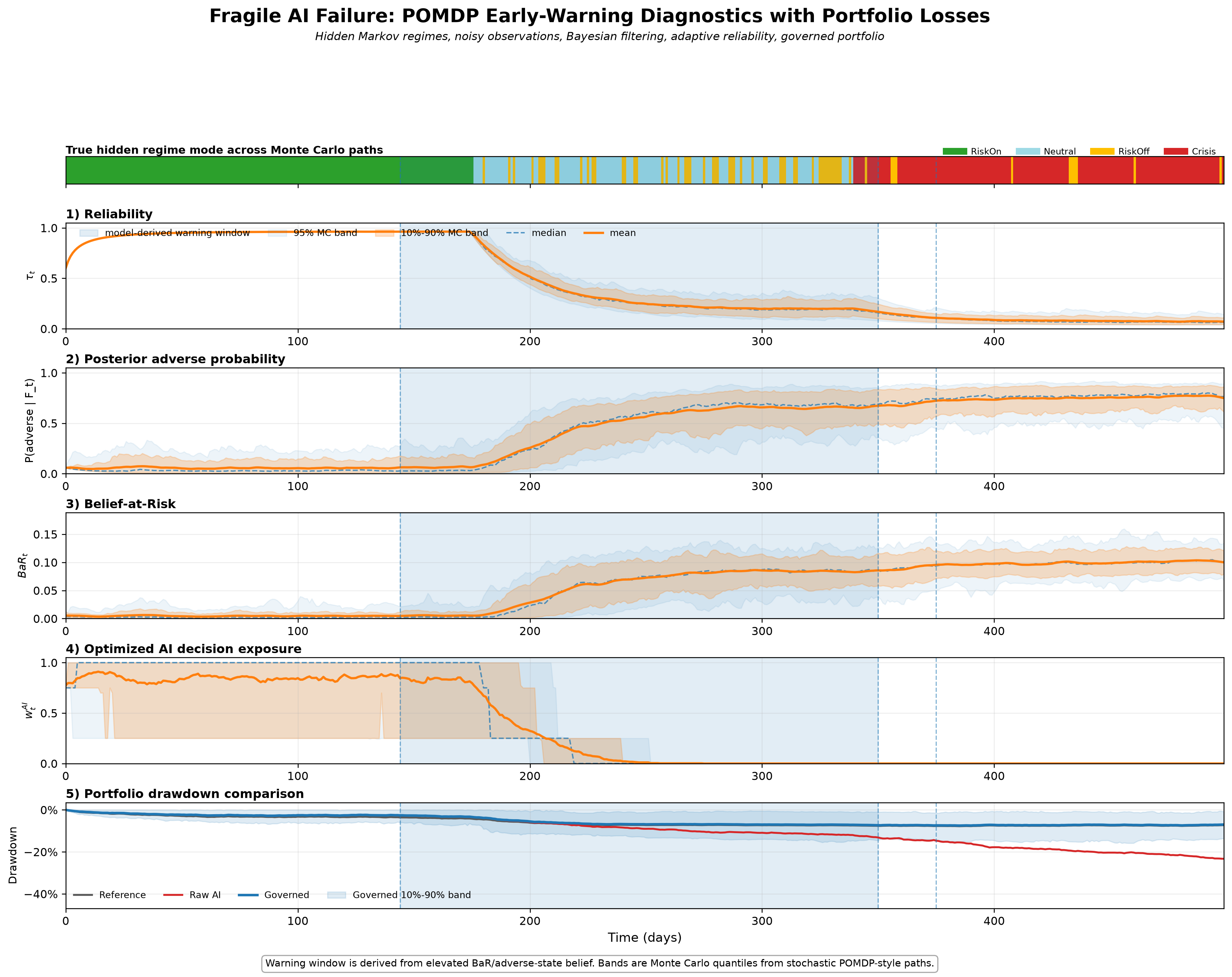}}{\fbox{Missing fragile AI early-warning figure.}}}
\caption{Fragile-AI early-warning diagnostics. The figure links latent regimes, statistical reliability, posterior adverse probability, Belief-at-Risk, optimized delegated authority, and portfolio drawdowns. It demonstrates how the governance policy reduces AI delegation as Bayesian and validation diagnostics deteriorate, providing an interpretable early-warning view of fragile AI behaviour before adverse outcomes fully materialize.}
\label{fig:fragile_early_warning}
\end{figure}

\subsection{Benchmark Governance Policies}
\label{subsec:benchmark_governance_policies}

The preceding experiments validate the structural behaviour of the Governance-Aware POMDP. We next ask whether sequential Bayesian governance provides advantages over simpler governance mechanisms that an organization might plausibly implement. The benchmark compares the proposed policy with five alternatives: Static Delegation, Confidence Threshold, Reliability-Only Delegation, Bayesian Shrinkage, and SR11-7 Style Governance. All policies use the same Bayesian filter, the same simulated paths, the same reference process, and the same governance objective. They differ only in how delegated AI authority is selected.

This design makes the benchmark a governance comparison rather than a comparison of information sets or forecasting models. Each policy receives the same posterior beliefs, reliability estimates, Belief-at-Risk diagnostics, and recommendation streams. Every policy is then scored using the same common governance utility. The benchmark therefore isolates the incremental value of sequential governance relative to static or heuristic allocation rules.

\begin{figure}[H]
\centering
\IfFileExists{figures/benchmark_policy_comparison.png}{\includegraphics[width=0.98\textwidth]{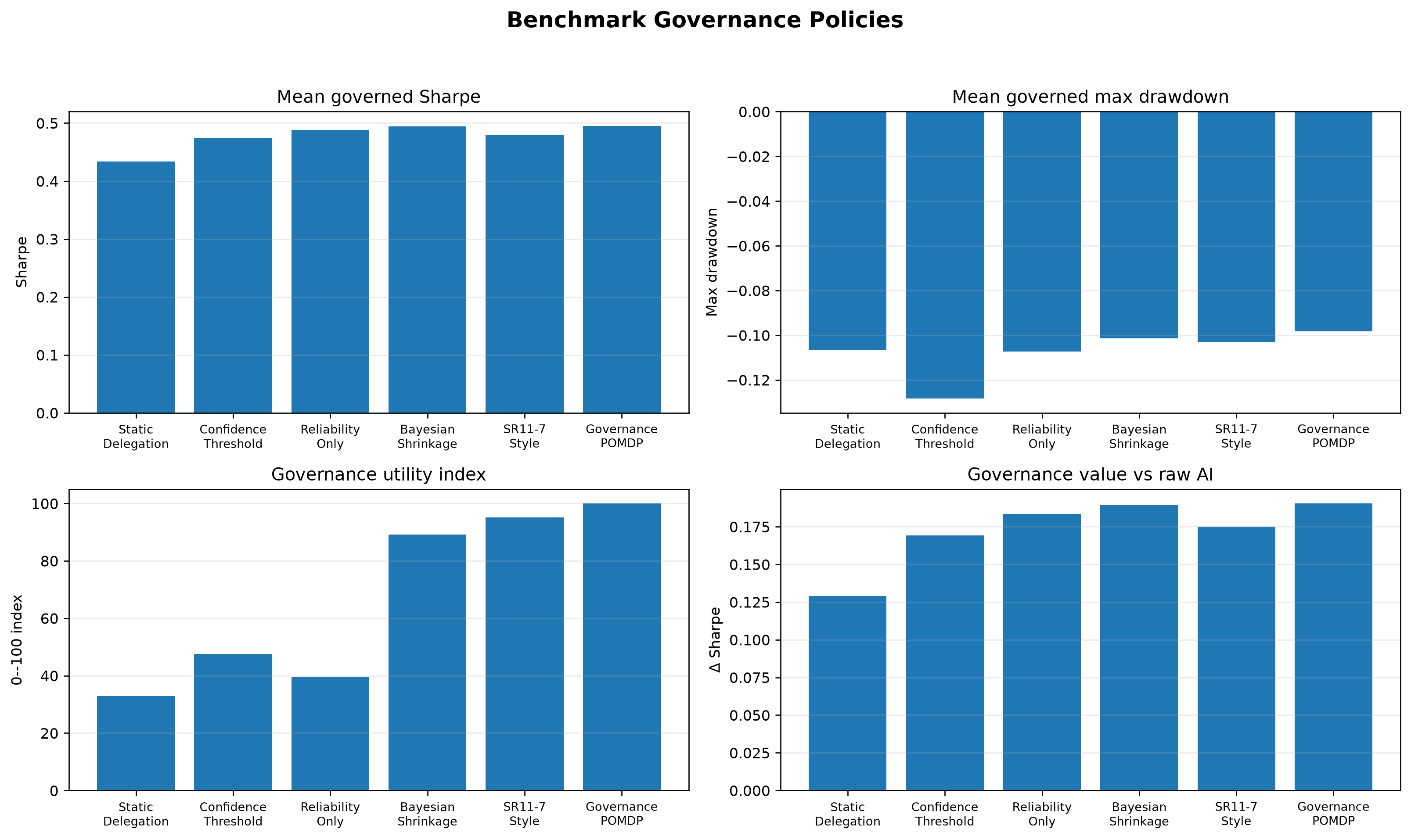}}{\fbox{Missing benchmark policy comparison figure.}}
\caption{Benchmark comparison across governance policies. The figure reports average governed performance across the benchmark policies under the same Bayesian filtering process, reference model, simulated market paths, and governance objective. The comparison separates investment outcomes from governance-policy structure and shows that the Governance-Aware POMDP remains competitive while optimizing the organizational governance objective.}
\label{fig:benchmark_policy_comparison}
\end{figure}

\subsection{Policy Dominance across AI-Quality Regimes}
\label{subsec:benchmark_policy_dominance}

Table~\ref{tab:benchmark_policy_dominance} reports the policy achieving the highest average governance utility in each AI-quality regime, together with the runner-up. The results show that no single simple heuristic dominates across all environments. Bayesian Shrinkage performs best when AI quality is persistently poor because rapid contraction toward the validated reference process minimizes unnecessary AI exposure. This is the regime in which a specialized shrinkage rule is expected to perform well.

In contrast, the Governance-Aware POMDP achieves the highest governance utility in the fragile, improving, and high-quality AI regimes. These regimes require adaptation because AI reliability is unstable, evolving, or valuable but still subject to governance risk. The results therefore distinguish specialized policies from adaptive governance. Simple shrinkage is effective under stationary poor AI quality, whereas sequential Bayesian governance provides the strongest general-purpose policy across dynamic operating environments.


\begin{table}[htbp]
\centering
\caption{Dominant governance policies under different AI-quality regimes. The table reports the policy achieving the highest average governance utility together with the second-best policy. Results are averaged over 100 Monte Carlo simulations per scenario.}
\label{tab:benchmark_policy_dominance}

\resizebox{0.92\textwidth}{!}{%
\begin{tabular}{llclc}
\toprule
\textbf{AI Quality Regime} &
\textbf{Best Policy} &
\textbf{Governance Utility} &
\textbf{Runner-up} &
\textbf{Governance Utility} \\
\midrule

Persistently Poor AI &
Bayesian Shrinkage &
\textbf{-0.0016} &
Governance POMDP &
-0.0022 \\

Fragile / Unstable AI &
\textbf{Governance POMDP} &
\textbf{0.0482} &
SR11-7 Style &
0.0389 \\

Improving AI &
\textbf{Governance POMDP} &
\textbf{0.0557} &
SR11-7 Style &
0.0077 \\

High-Quality AI &
\textbf{Governance POMDP} &
\textbf{0.0472} &
SR11-7 Style &
-0.0042 \\

\bottomrule
\end{tabular}%
}
\end{table}

\subsection{Cross-Scenario Benchmark Rankings}
\label{subsec:benchmark_policy_rankings}

Figure~\ref{fig:benchmark_policy_rankings} and Table~\ref{tab:benchmark_rankings} summarize benchmark performance across all four AI-quality regimes. The Governance-Aware POMDP achieves the best overall ranking by combining competitive investment performance with the strongest governance-utility ranking. The policy is not tuned to maximize Sharpe ratio in every environment. Instead, it optimizes delegated authority by balancing expected AI value against governance risk, intervention costs, and switching costs.

This distinction is managerially important. Confidence Threshold and Reliability-Only policies can perform well in selected investment scenarios, but they may delegate excessive authority when confidence or historical reliability is misleading. Static Delegation provides a simple baseline but cannot adapt to changes in evidence quality. Bayesian Shrinkage is effective when AI is persistently poor, but it does not provide the same general-purpose adaptation when AI reliability evolves. The Governance-Aware POMDP therefore functions as an adaptive decision-authority allocation mechanism rather than as a rule for maximizing AI utilization.

\begin{figure}[H]
\centering
\IfFileExists{figures/benchmark_policy_rankings.png}{\includegraphics[width=0.86\textwidth]{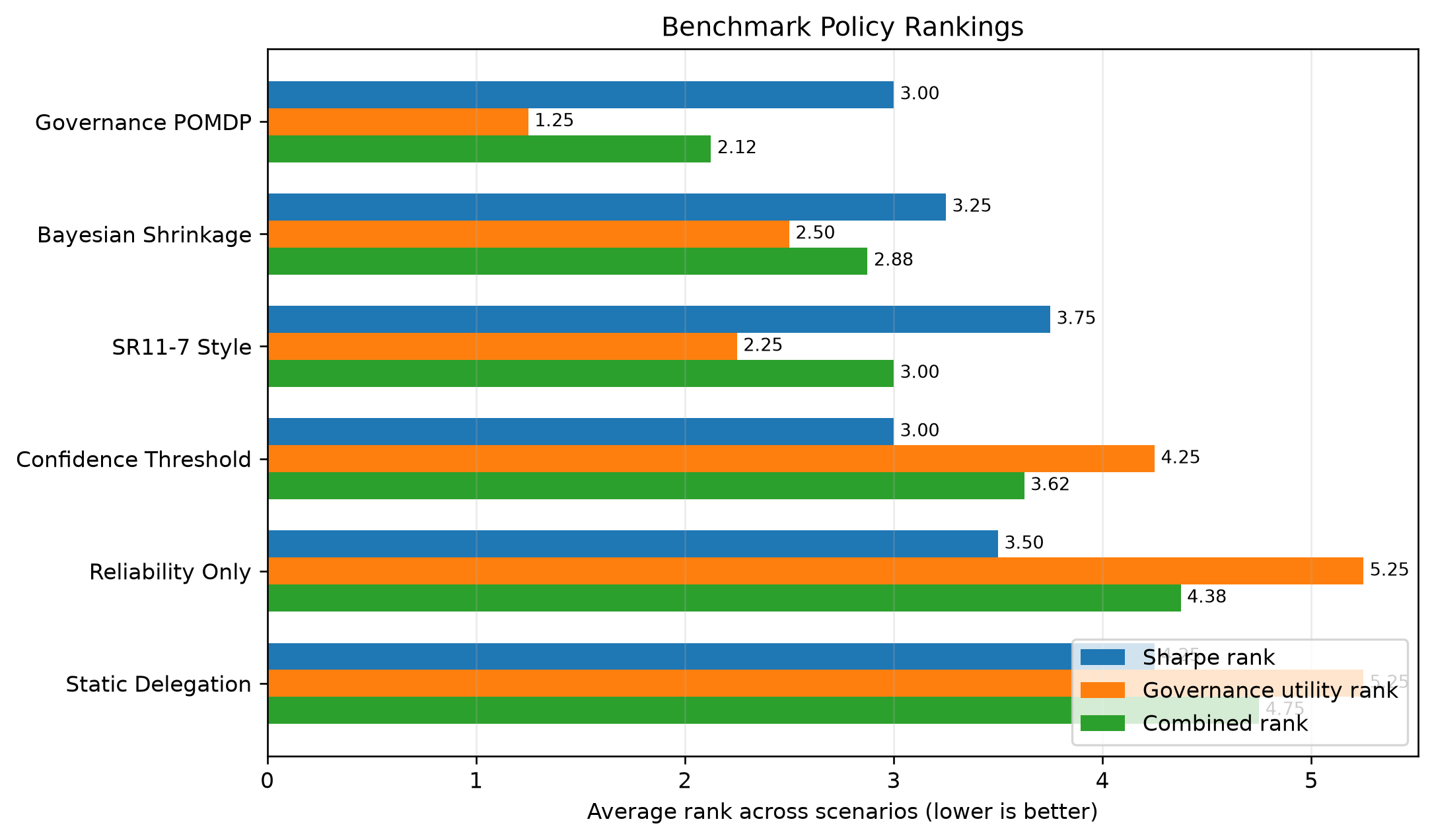}}{\fbox{Missing benchmark policy rankings figure.}}
\caption{Average benchmark rankings across AI-quality regimes. Lower values indicate better performance. The Governance-Aware POMDP achieves the best overall rank by combining competitive investment performance with the strongest average governance-utility rank across heterogeneous AI operating conditions.}
\label{fig:benchmark_policy_rankings}
\end{figure}


\begin{table}[htbp]
\centering
\caption{Average benchmark rankings across the four AI-quality scenarios. Lower ranks indicate better performance. Sharpe rank measures investment performance, governance rank measures performance under the common governance objective, and the overall rank is the average of the two.}
\label{tab:benchmark_rankings}

\resizebox{0.82\textwidth}{!}{%
\begin{tabular}{lccc}
\toprule
\textbf{Governance Policy} &
\textbf{Average Sharpe Rank} &
\textbf{Average Governance Rank} &
\textbf{Overall Rank} \\
\midrule
\textbf{Governance POMDP} & \textbf{3.00} & \textbf{1.25} & \textbf{2.125} \\
Bayesian Shrinkage        & 3.25 & 2.50 & 2.875 \\
SR11-7 Style              & 3.75 & 2.25 & 3.000 \\
Confidence Threshold      & 3.00 & 4.25 & 3.625 \\
Reliability Only          & 3.50 & 5.25 & 4.375 \\
Static Delegation         & 4.25 & 5.25 & 4.750 \\
\bottomrule
\end{tabular}%
}
\end{table}

The benchmark results support the central interpretation of the paper: the value of the Governance-Aware POMDP is not that it dominates every specialized heuristic in every stationary environment, but that it provides a robust adaptive policy for allocating AI decision authority when AI quality is uncertain, unstable, or evolving. This comparative evidence complements the structural validation results above and motivates the broader managerial discussion in Section~\ref{sec:discussion}.

\begin{table}[H]
\centering
\small
\resizebox{\textwidth}{!}{%
\begin{tabular}{p{0.24\textwidth}p{0.19\textwidth}p{0.15\textwidth}p{0.34\textwidth}}
\toprule
Property & Experiment & Status & Evidence \\
\midrule
Belief-state recovery & All synthetic scenarios & Supported & Mean posterior mode accuracy = 54.6\% \\
Reference-model shrinkage & Bad AI & Supported & Governed minus Raw AI CAGR = 1.74\% \\
Convergence toward validated AI & Perfect AI & Supported & Governed minus Raw AI CAGR = -2.42\% \\
Sequential validation / trust recovery & Improving AI & Supported & Governed minus Reference CAGR = -0.39\% \\
Downside protection under fragile AI & Fragile AI failure & Supported & Max drawdown improvement vs Raw AI = 18.65\% \\
Monotone governance in BaR & BaR sweep & Supported & Change in optimized AI exposure = -0.250 \\
Reported-confidence robustness & Confidence-only perturbation & Supported & Exposure range = 0.000000 \\
Forecast accuracy validation & Evidence-content sweep & Supported & Accuracy-exposure slope = 1.045 \\
Governance appetite calibration & BaR-penalty sensitivity & Supported & AI exposure decreases as $\lambda_{BaR}$ increases \\
Early warning & Fragile AI failure & Supported & Lead time in simulation steps = 398 steps \\
\bottomrule
\end{tabular}%
}
\caption{Theory validation summary. The table links each structural property to a validation experiment and records the diagnostic evidence used to assess whether the governance policy behaves consistently with the theory.}
\label{tab:theorem_validation}
\end{table}

\begin{table}[htbp]
\centering
\caption{Monte Carlo summary with bootstrap confidence intervals and drawdown diagnostics.}
\label{tab:monte_carlo_summary}
\resizebox{\textwidth}{!}{%
\begin{tabular}{lrrrrrrrrr}
\toprule
Scenario & Runs & \makecell{Reference\\CAGR\\Mean} & \makecell{Ungoverned AI\\CAGR\\Mean} & \makecell{Governed\\CAGR\\Mean} & \makecell{Governed\\CAGR\\95\\\% CI Low} & \makecell{Governed\\CAGR\\95\\\% CI High} & \makecell{Reference\\Max\\Drawdown} & \makecell{Ungoverned AI\\Max\\Drawdown} & \makecell{Governed\\Max\\Drawdown} \\
\midrule
Bad AI & 500.00 & 0.020 & -0.004 & 0.013 & 0.009 & 0.017 & -0.100 & -0.218 & -0.093 \\
Fragile AI Failure & 500.00 & 0.053 & -0.025 & 0.056 & 0.050 & 0.063 & -0.118 & -0.348 & -0.125 \\
Improving AI & 500.00 & 0.041 & 0.075 & 0.050 & 0.045 & 0.056 & -0.104 & -0.151 & -0.099 \\
Perfect AI & 500.00 & 0.047 & 0.089 & 0.061 & 0.055 & 0.067 & -0.104 & -0.145 & -0.101 \\
\bottomrule
\end{tabular}%
}
\end{table}

\section{Managerial Implications and Discussion}
\label{sec:discussion}

The analytical framework developed in this paper suggests that the principal managerial challenge created by artificial intelligence is not prediction but governance. As AI systems become increasingly capable of generating probabilistic assessments and candidate actions, organizations must continuously determine how much authority should be delegated to AI-generated recommendations while preserving independent organizational control. The governance policy developed here provides one quantitative mechanism for addressing this problem by combining Bayesian inference, sequential optimization, and empirical validation within a unified decision framework. Rather than treating governance as a procedural activity performed after model development, the framework embeds governance directly within organizational decision making, allowing delegation to evolve continuously as evidence quality, uncertainty, and institutional objectives change.

A first implication concerns the role of governance within organizations. Existing AI governance frameworks emphasize documentation, transparency, explainability, and human oversight. These mechanisms remain necessary, especially within regulated industries, but they provide limited guidance regarding how organizations should determine the appropriate degree of reliance on AI under changing conditions. The governance policy complements these frameworks by introducing a quantitative decision layer that transforms probabilistic evidence into delegation decisions. Governance therefore becomes an operational capability that can be optimized, validated, calibrated, and continuously improved rather than a static compliance process.

A second implication concerns the distinction between confidence and trust. Many AI systems generate numerical confidence estimates together with their recommendations, and these quantities are often interpreted as indicators of reliability. The empirical analysis demonstrates, however, that governance should respond primarily to externally validated Bayesian evidence rather than internally generated confidence declarations. The reported-confidence robustness experiments show that delegation remains largely unchanged when confidence alone varies, whereas forecast-accuracy validation demonstrates systematic changes in delegated authority as evidence quality improves. Organizations relying primarily on confidence thresholds may therefore allocate decision authority inefficiently when AI systems become systematically overconfident or underconfident.

A third implication concerns organizational adaptability. The governance policy continuously reallocates authority as new information becomes available, allowing organizations to respond dynamically to improvements or deterioration in AI capability. This interpretation is consistent with the literature on organizational learning and dynamic capabilities, which argues that sustained performance depends on the ability to reconfigure organizational resources as environments change \citep{teece1997dynamic,eisenhardt2000dynamic}. Within the present framework, delegated decision authority becomes one such resource. Bayesian learning updates the informational environment, Belief-at-Risk summarizes governance uncertainty, and the governance policy adjusts delegation accordingly.

The benchmark experiments further clarify when governance complexity is justified. Simple policies can be effective under stationary conditions: for example, Bayesian Shrinkage performs well when AI quality is persistently poor because rapid contraction toward the reference process is sufficient. However, the benchmark results show that the Governance-Aware POMDP provides the strongest general-purpose governance performance when AI quality is fragile, improving, or otherwise dynamic. This distinction is important for implementation. Organizations do not need complex governance machinery merely to reject consistently poor AI; the value of sequential Bayesian governance arises when AI-generated intelligence is potentially useful but uncertain, changing, and consequential.

The framework also has implications for organizational design. Classical theories of decision rights emphasize that effective organizations allocate authority according to information, incentives, and expertise \citep{simon1947,jensen1995,aghion1997}. The proposed governance policy extends these ideas by treating AI as an organizational actor whose authority is neither fixed nor binary. Delegated authority evolves according to posterior evidence quality and governance risk. This interpretation provides a bridge between organizational economics and contemporary AI governance by replacing static delegation rules with an adaptive Bayesian decision policy.

\paragraph{Benchmark implications.}
The comparative benchmark experiments provide an additional managerial perspective. Rather than evaluating the proposed Governance-Aware POMDP in isolation, the synthetic laboratory compares it with representative governance mechanisms including Static Delegation, Confidence Threshold, Reliability-Only Delegation, Bayesian Shrinkage, and SR11-7 style governance under identical Bayesian beliefs, information, simulated environments, and governance objectives. The results demonstrate that no single governance policy is universally optimal. Bayesian Shrinkage performs best when AI quality is persistently poor because rapid contraction toward the validated reference process minimizes unnecessary AI exposure. In contrast, the Governance-Aware POMDP achieves the highest governance utility across fragile, improving, and high-quality AI regimes, indicating that sequential Bayesian governance provides the strongest general-purpose policy when AI quality is uncertain, evolving, or unstable. These findings suggest that governance complexity should match environmental complexity: simple shrinkage rules are effective under stationary conditions, whereas adaptive organizational delegation becomes increasingly valuable as AI capability changes over time.

\paragraph{Organizational capability.}
Viewed from a Management Science perspective, the principal contribution is therefore not a new prediction algorithm but a quantitative theory of adaptive organizational delegation. Bayesian inference estimates the informational state, governance determines the appropriate allocation of decision authority, and execution remains under institutional control. This separation transforms governance from a procedural compliance activity into a measurable organizational capability that can be optimized, benchmarked, calibrated, and validated empirically.

Several limitations remain. The current implementation optimizes over a finite governance action space to preserve interpretability and computational tractability. Although this design facilitates practical deployment, future work may investigate continuous governance controls and richer organizational state representations. The forecast-accuracy validation experiments are controlled perturbations of evidence quality rather than direct measurements of ex ante predictive skill. Finally, while historical replay demonstrates consistent behaviour using observed market data and cached AI outputs, future work should evaluate the policy using continuously operating production AI systems and multiple interacting agents.

Taken together, the theoretical development and empirical evidence suggest that the principal managerial challenge created by AI is not whether AI should be used, but how organizations should allocate decision authority when AI-generated intelligence is valuable but uncertain. The governance policy proposed in this paper demonstrates that this problem can be formulated as a Bayesian sequential optimization problem whose solution adapts continuously to evidence quality, governance risk, and institutional objectives.

\section{Historical Market Replay}
\label{sec:historical_validation}

Historical replay complements the synthetic laboratory by evaluating the same governance logic using observed market data and cached AI outputs. The purpose of the replay is not to claim that a particular historical strategy is optimal, but to examine whether the governance policy exhibits the same structural behaviour outside a fully controlled simulation environment. The historical validation mirrors the synthetic methodology: representative stress tests, reported LLM-confidence robustness, forecast-accuracy validation, and governance-appetite sensitivity. This design distinguishes behaviour arising from the governance architecture from behaviour induced by any particular synthetic data-generating process and follows the validation roadmap in Table~\ref{tab:validation_roadmap}.

In the historical validation, cached AI outputs are perturbed to construct low-reliability, noisy, improving, and high-reliability operating conditions. Reported-confidence robustness is tested by varying only the confidence field while keeping evidence and recommendations fixed. Forecast-accuracy validation perturbs evidence content and recommendations while holding reported confidence fixed. Governance-appetite sensitivity varies the Belief-at-Risk penalty without changing the AI cache or market data. These experiments therefore evaluate whether the policy behaves consistently when confronted with historical market structure rather than synthetic regimes alone.

\subsection{Historical Stress Tests}
Table~\ref{tab:historical_validation_summary} summarizes the historical stress-test results. The governance policy assigns substantially lower delegated authority under low-reliability evidence and higher authority under high-reliability evidence, while Belief-at-Risk remains negatively associated with delegation. This pattern is consistent with the synthetic results and supports the interpretation that governance reallocates authority according to evidence quality and governance risk.

\begin{table}[H]
\centering
\small
\resizebox{\textwidth}{!}{%
\begin{tabular}{lrrrrrr}
\toprule
Scenario & Exposure & Reject rate & Governed Sharpe & Raw AI Sharpe & $\Delta$ Sharpe vs AI & Mean BaR \\
\midrule
Low Reliability & 0.174 & 0.645 & 0.993 & 0.832 & 0.161 & 0.0181 \\
Noisy & 0.405 & 0.017 & 0.527 & 1.217 & -0.690 & 0.0098 \\
Improving & 0.368 & 0.176 & 1.028 & 0.986 & 0.042 & 0.0101 \\
High Reliability & 0.539 & 0.062 & 0.393 & 1.556 & -1.162 & 0.0093 \\
\bottomrule
\end{tabular}%
}
\caption{Historical market replay stress-test summary. Governance reduces exposure and increases rejection under degraded AI conditions while preserving a reference-model comparison.}
\label{tab:historical_validation_summary}
\end{table}

\begin{figure}[H]
\centering
\IfFileExists{figures/historical_adaptive_trust_allocation_five_panel.png}{\includegraphics[width=0.95\textwidth]{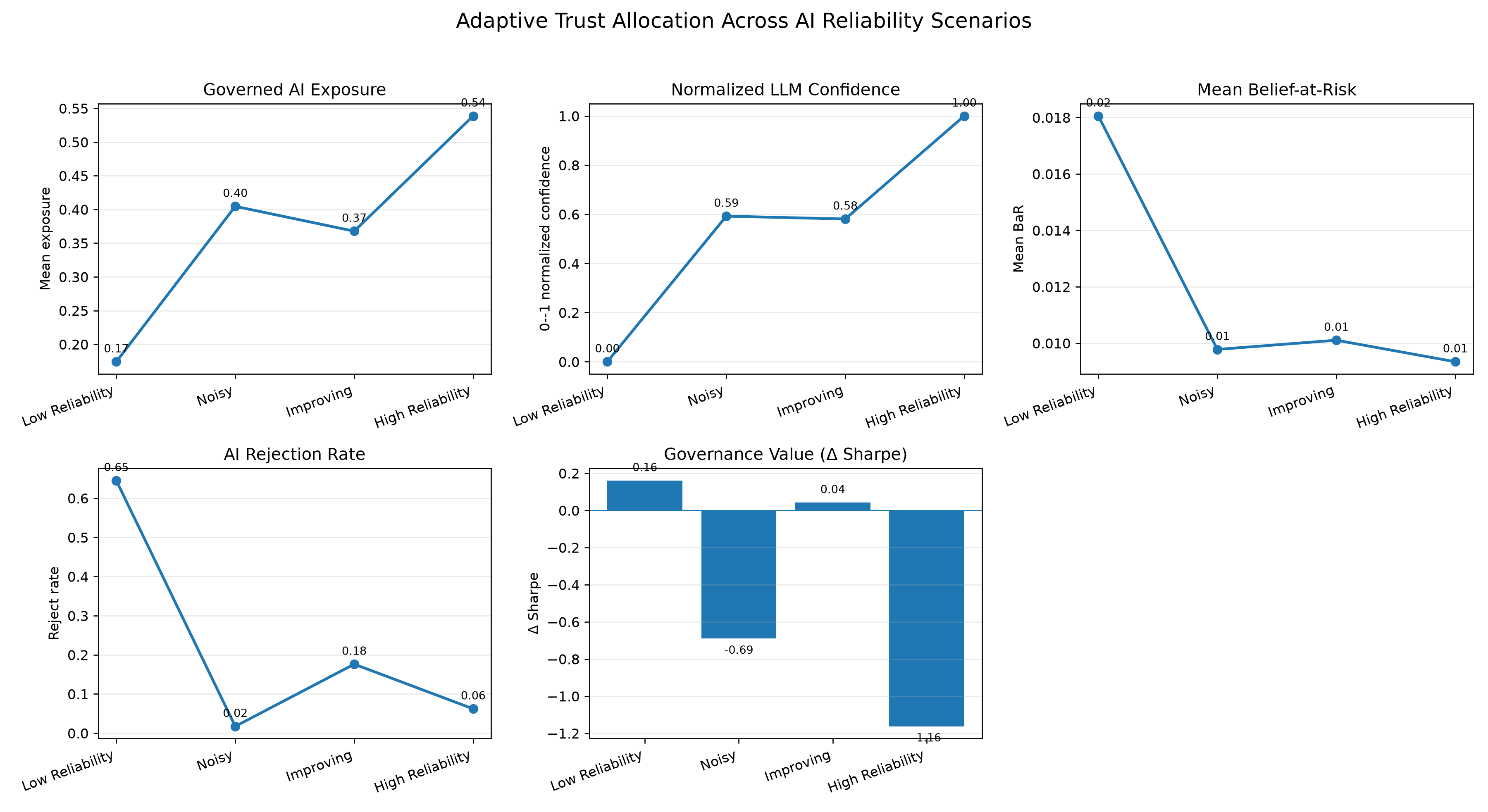}}{\fbox{Missing historical adaptive trust figure.}}
\caption{Historical adaptive delegation across AI operating conditions. Controlled cached-AI perturbations show that delegated authority, rejection, Belief-at-Risk, and governance value change coherently across degraded, noisy, improving, and high-reliability operating regimes. The figure provides historical evidence that the governance policy responds to evidence quality rather than merely reproducing the behaviour of the raw AI signal.}
\label{fig:historical_adaptive_trust}
\end{figure}

\subsection{Historical Reported LLM-Confidence Robustness}
Table~\ref{tab:historical_confidence_robustness} and Figure~\ref{fig:historical_confidence_robustness} show that varying only reported confidence leaves delegated authority, Belief-at-Risk, and rejection essentially unchanged. This result is important because it demonstrates that the separation between confidence and trust is not merely an artifact of the synthetic laboratory.

\begin{table}[H]
\centering
\small
\resizebox{0.85\textwidth}{!}{%
\begin{tabular}{lrrr}
\toprule
Diagnostic & Start & End & Range \\
\midrule
Reported LLM confidence & 0.031 & 0.232 & 0.201 \\
Governed AI exposure & 0.761 & 0.761 & 0.000000 \\
Belief-at-Risk & 0.0102 & 0.0102 & 0.000000 \\
Reject rate & 0.112 & 0.112 & 0.000000 \\
\bottomrule
\end{tabular}%
}
\caption{Historical reported LLM-confidence robustness summary. Varying only reported confidence leaves governance exposure, Belief-at-Risk, and rejection essentially unchanged, indicating that the controller is not mechanically driven by self-reported confidence.}
\label{tab:historical_confidence_robustness}
\end{table}

\begin{figure}[H]
\centering
\IfFileExists{figures/historical_reported_confidence_robustness.png}{\includegraphics[width=0.95\textwidth]{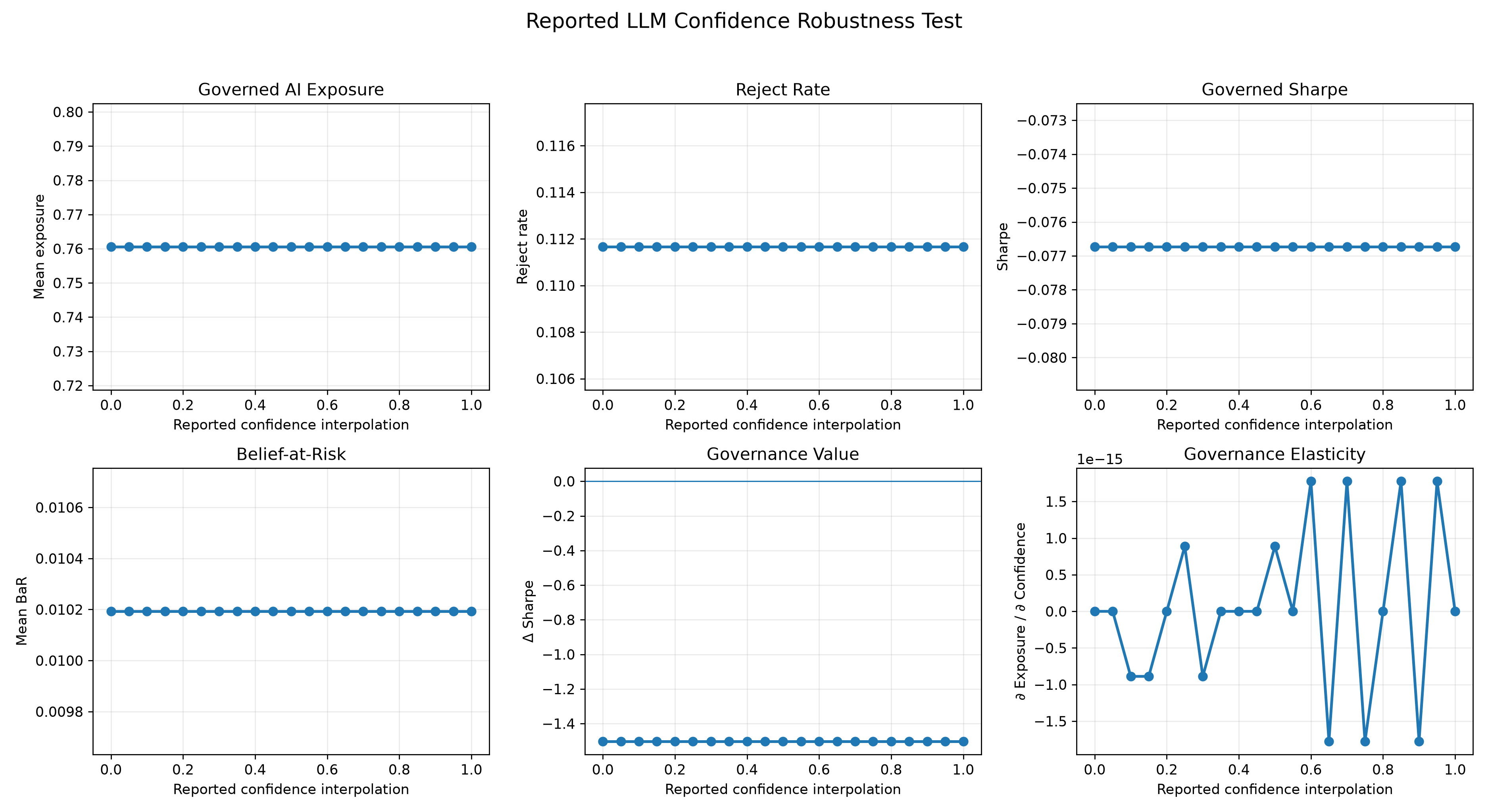}}{\fbox{Missing historical confidence robustness figure.}}
\caption{Historical reported LLM-confidence robustness. Confidence-only perturbations leave the governance policy nearly invariant when evidence and recommendations are fixed, supporting the interpretation that organizational trust is not mechanically determined by AI self-confidence.}
\label{fig:historical_confidence_robustness}
\end{figure}

\subsection{Historical Forecast Accuracy Validation}
Table~\ref{tab:historical_forecast_accuracy} and Figure~\ref{fig:historical_forecast_accuracy} report the primary historical adaptive-delegation result. Reported confidence is held fixed while evidence content and recommendations are perturbed from degraded to original. Delegated authority rises substantially and Belief-at-Risk declines as evidence quality improves.
\begin{table}[H]
\centering
\small
\caption{Historical forecast-accuracy validation summary. Evidence quality changes while reported confidence is held fixed; delegated authority rises and Belief-at-Risk declines, supporting adaptive delegation.}
\label{tab:historical_forecast_accuracy}
\resizebox{\textwidth}{!}{%
\begin{tabular}{lrr}
\toprule
Diagnostic & Low evidence quality & High evidence quality \\
\midrule
Forecast-accuracy interpolation & 0.00 & 1.00 \\
Fixed reported confidence & 0.626 & 0.626 \\
Delegated AI authority & 0.262 & 0.761 \\
Reject rate & 0.268 & 0.112 \\
Belief-at-Risk & 0.0147 & 0.0102 \\
Governed Sharpe & 1.540 & -0.077 \\
\midrule
Delegation slope & \multicolumn{2}{c}{0.5086} \\
$R^2$ & \multicolumn{2}{c}{0.9631} \\
Spearman rank correlation & \multicolumn{2}{c}{1.0000} \\
\bottomrule
\end{tabular}%
}
\end{table}

\begin{figure}[H]
\centering
\IfFileExists{figures/historical_forecast_accuracy_validation.png}{\includegraphics[width=0.95\textwidth]{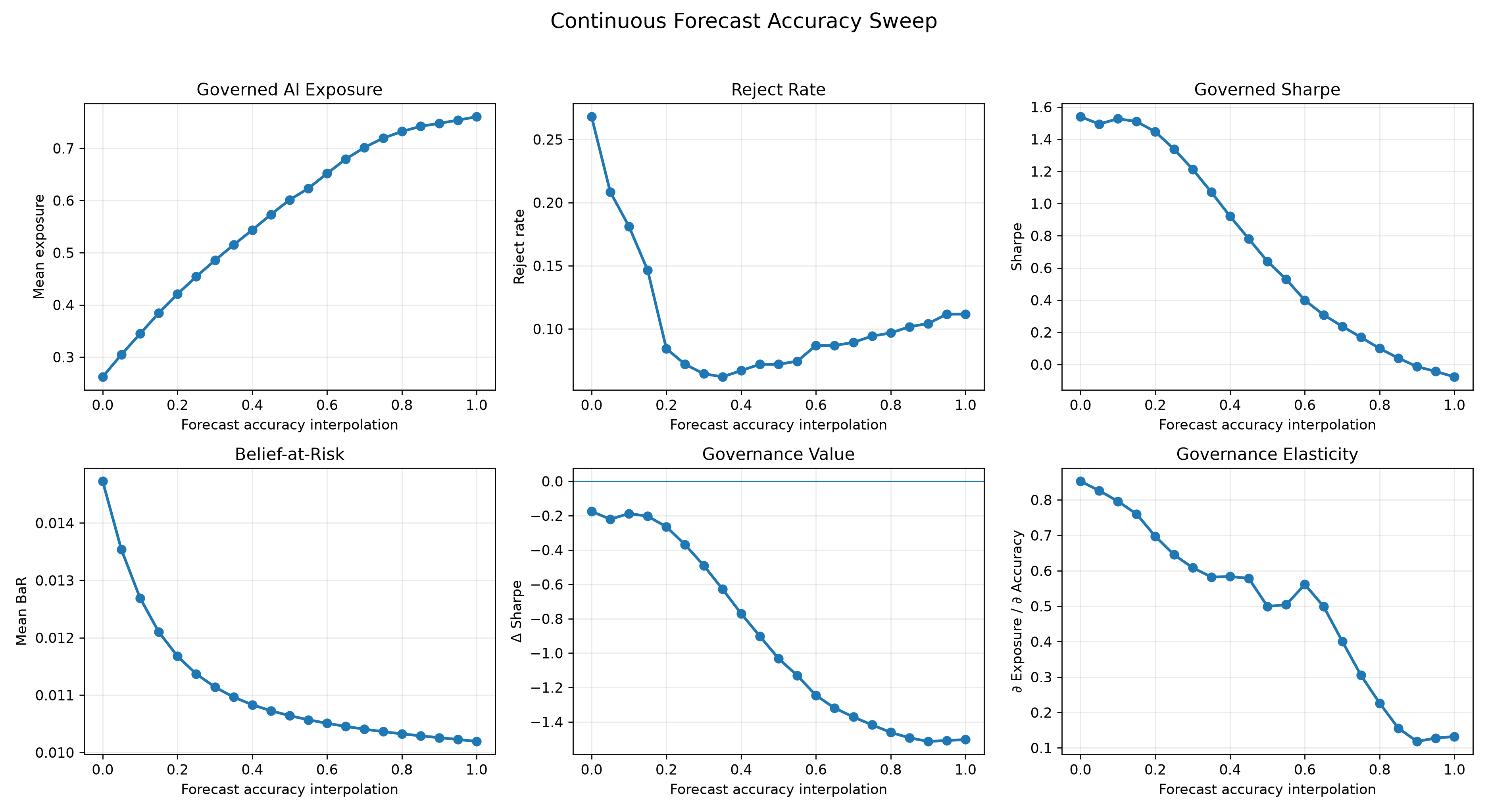}}{\fbox{Missing historical forecast accuracy figure.}}
\caption{Historical forecast-accuracy validation. Delegated authority increases with evidence quality while reported confidence is held approximately fixed. The result supports adaptive delegation based on Bayesian evidence rather than confidence declarations.}
\label{fig:historical_forecast_accuracy}
\end{figure}

\begin{figure}[H]
\centering
\IfFileExists{figures/historical_reported_confidence_vs_accuracy_validation.png}{\includegraphics[width=0.95\textwidth]{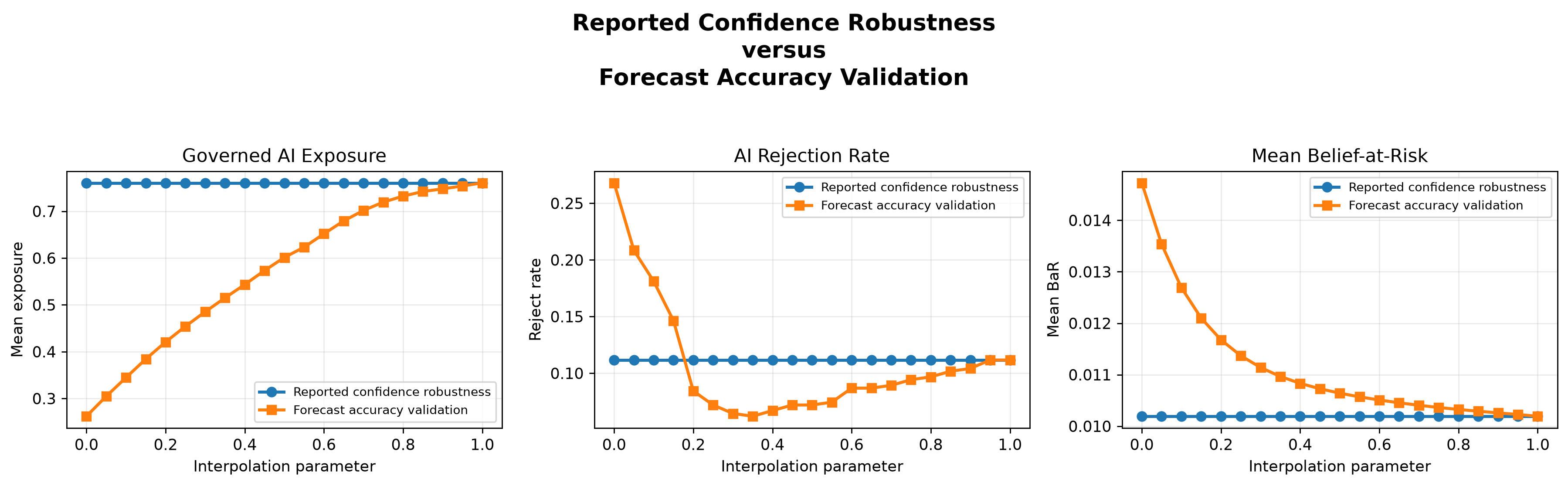}}{\fbox{Missing historical comparison figure.}}
\caption{Historical comparison of reported confidence robustness and forecast-accuracy validation. Confidence-only perturbations generate negligible changes in governance behaviour, whereas evidence-quality perturbations generate systematic changes in delegated authority, rejection, and Belief-at-Risk.}
\label{fig:historical_confidence_vs_accuracy}
\end{figure}

\subsection{Historical Governance Appetite}
Table~\ref{tab:historical_governance_appetite} and Figure~\ref{fig:historical_governance_appetite} show that increasing the BaR penalty produces a more conservative governance policy. This result supports the claim that governance appetite can be calibrated through transparent policy parameters rather than through redesign of the decision process.

\begin{table}[H]
\centering
\small
\resizebox{\textwidth}{!}{%
\begin{tabular}{rrrrrr}
\toprule
$\lambda_{BaR}$ & Exposure & Reject rate & Approve rate & Governed Sharpe & $\Delta$ Sharpe vs AI \\
\midrule
0.5 & 0.746 & 0.020 & 0.342 & 0.625 & -0.930 \\
1.0 & 0.675 & 0.030 & 0.253 & 0.498 & -1.057 \\
1.5 & 0.620 & 0.042 & 0.203 & 0.441 & -1.115 \\
2.5 & 0.539 & 0.062 & 0.156 & 0.393 & -1.162 \\
\bottomrule
\end{tabular}%
}
\caption{Historical governance-appetite sensitivity. Increasing the Belief-at-Risk penalty produces a more conservative governance policy without changing the model structure.}
\label{tab:historical_governance_appetite}
\end{table}

\begin{figure}[H]
\centering
\IfFileExists{figures/historical_governance_appetite_sensitivity.png}{\includegraphics[width=0.95\textwidth]{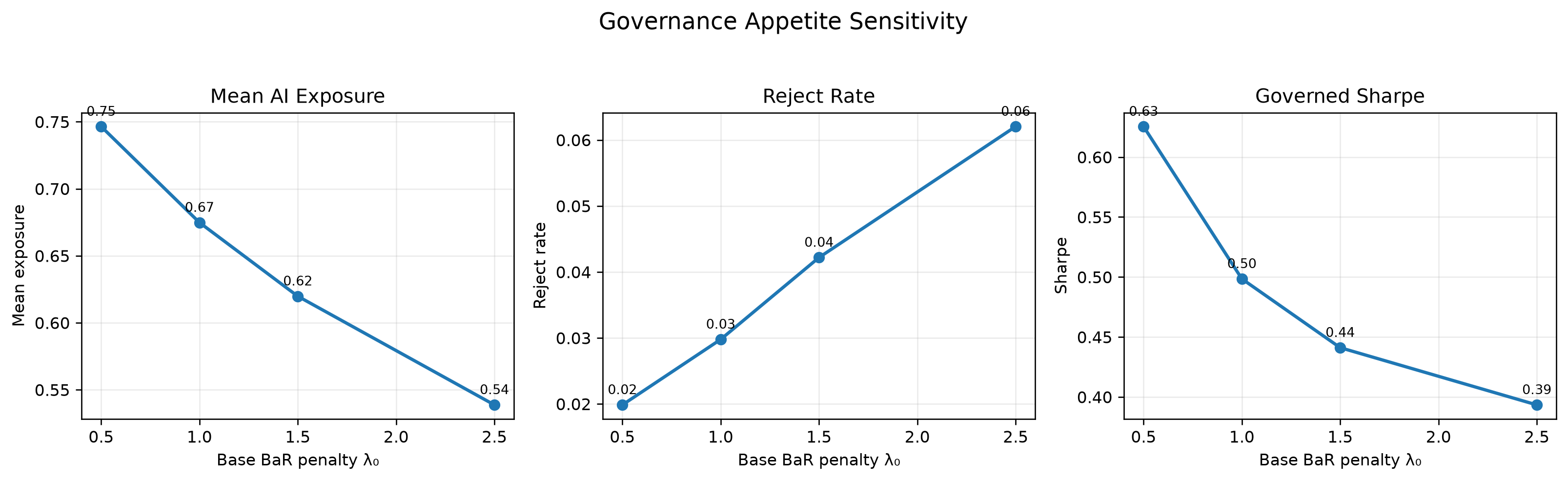}}{\fbox{Missing historical governance appetite figure.}}
\caption{Historical governance-appetite sensitivity. Increasing the Belief-at-Risk penalty reduces delegated authority and shifts the policy toward more conservative governance without changing the AI outputs, reference process, or market data.}
\label{fig:historical_governance_appetite}
\end{figure}

\section{Conclusion}
\label{sec:conclusion}

\textbf{Summary.} Artificial intelligence changes not only how organizations process information but also how they allocate decision authority. As AI systems become increasingly capable of generating probabilistic assessments and candidate actions, organizations face a new managerial problem: determining how much authority should be delegated to AI-generated intelligence when evidence quality, uncertainty, and governance risk evolve through time. This paper has argued that this problem is fundamentally one of adaptive organizational delegation rather than prediction alone.

\textbf{Framework.} To address this challenge, we developed a Governance-Aware Partially Observable Markov Decision Process that separates inference, validation, governance, and execution. Bayesian filtering transforms heterogeneous observations into posterior beliefs over latent organizational states, Belief-at-Risk summarizes governance uncertainty through posterior uncertainty, belief instability, and downside consequence, and an approximate Bellman recursion determines the degree of delegated authority assigned to AI-generated recommendations. Unlike conventional applications of POMDPs, the optimization variable is delegated organizational authority rather than the operational action itself.

\textbf{Theory.} The paper established several structural properties of the proposed governance framework, including existence of an optimal governance policy, Bayesian belief-state sufficiency, monotone adaptation to increasing governance risk, robustness to confidence-only perturbations, adaptive delegation under improving evidence quality, and preservation of a validated reference-process fallback. Together, these results provide a decision-theoretic foundation for adaptive AI governance under uncertainty.

\textbf{Validation.} The empirical analysis complements the theoretical development through both synthetic governance experiments and historical market replay. The synthetic laboratory demonstrates that the governance policy exhibits graceful degradation under deteriorating AI quality, robustness to confidence-only perturbations, adaptive delegation as Bayesian evidence improves, interpretable governance-appetite calibration, and early-warning behaviour under fragile AI failure. Historical replay shows that these qualitative governance properties persist when the same governance architecture is evaluated using observed market data.

\textbf{Benchmarking.} A distinguishing contribution of this paper is the introduction of a comparative benchmarking framework for AI governance policies. Five representative governance mechanisms?Static Delegation, Confidence Threshold, Reliability-Only Delegation, Bayesian Shrinkage, and SR11-7 Style Governance?were evaluated under identical Bayesian beliefs, information, simulated environments, and governance objectives. The benchmark demonstrates that no single governance policy is universally optimal. Bayesian Shrinkage performs best when AI quality is persistently poor because rapid contraction toward the validated reference process minimizes unnecessary AI exposure. In contrast, the Governance-Aware POMDP achieves the highest governance utility across fragile, improving, and high-quality AI regimes, demonstrating that sequential Bayesian governance provides the strongest general-purpose governance policy when AI quality is uncertain, evolving, or unstable.

\textbf{Managerial Implications.} These findings suggest that governance should not be interpreted as maximizing AI utilization. Rather, governance should determine how much organizational authority should be delegated to AI-generated intelligence under prevailing uncertainty. In stationary environments, relatively simple governance rules may be sufficient. However, as AI systems become adaptive, continuously updated, and increasingly integrated into organizational workflows, sequential governance becomes progressively more valuable because delegated authority must evolve together with evidence quality and organizational objectives.

\textbf{Broader Perspective.} More broadly, this paper suggests that adaptive AI delegation may become as important to AI-enabled organizations as quantitative risk management has become to modern financial institutions. Rather than relying on static approval rules or heuristic confidence thresholds, organizations can treat delegated authority itself as an object of optimization. Bayesian inference estimates the evolving informational state, governance determines the economically appropriate degree of delegation, and organizational execution remains under institutional control. This perspective transforms AI governance from a largely procedural activity into a quantitative organizational capability that can be optimized, benchmarked, calibrated, and empirically validated.

\textbf{Future Research.} Several directions remain open. Future work may investigate continuous governance action spaces, richer latent-state representations, multi-agent governance problems, fully Bayesian parameter learning, and governance under strategic interaction among multiple AI systems. Extending the benchmarking framework to additional governance mechanisms and validating adaptive delegation across multiple industries represent particularly promising directions. More generally, the proposed framework provides a quantitative foundation upon which future research can build increasingly sophisticated theories of adaptive organizational delegation under uncertainty.

\appendix

\section{Supplementary Proof Details}
\label{app:proofs}

This appendix provides additional detail for the structural results in Section~\ref{sec:theory}. The main text states the results in organizational language; the appendix records the probability and optimization arguments explicitly.

\subsection{Filtering and Sufficiency}
Let $\mathcal F_t$ denote the sigma-field generated by the observation history, AI evidence, realized outcomes used for validation, and recursively updated governance diagnostics up to time $t$. Under the Markov transition model, the posterior belief $b_t=P(X_t\mid\mathcal F_t)$ satisfies the prediction-update recursion used in the main text. For every bounded measurable function $h$, the conditional expectation of the next latent state satisfies
\[
\E[h(X_{t+1})\mid\mathcal F_t]=\int h(x')\left\{\int P(dx'\mid x)b_t(dx)\right\}.
\]
In the finite-state implementation this reduces to $\sum_i b_t(i)\sum_j P_{ij}h(j)$. Hence the complete history affects future latent-state predictions only through $b_t$. If the additional governance diagnostics are themselves updated recursively, the augmented governance state is Markov. This is the precise sense in which posterior beliefs are sufficient for governance optimization.

\subsection{Bellman Operator}
The Bellman operator used in the paper acts on bounded measurable value functions. Boundedness of $r_G$ ensures that $\mathcal T$ maps bounded functions into bounded functions. The contraction proof in Theorem~\ref{thm:existence} relies only on the discount factor and the fact that conditional expectation is a nonexpansive operator under the supremum norm. Thus the existence result is not a numerical artifact of the implementation; it is a standard fixed-point property of discounted dynamic programming.

\subsection{Monotone Comparative Statics}
The monotonicity results use finite-action comparative statics. If higher governance actions correspond to higher delegation weights and the Bellman objective has decreasing differences in governance risk and action, then increases in governance risk cannot make higher-delegation actions relatively more attractive. The direct BaR penalty produces exactly this decreasing-differences structure because the marginal cost of higher delegation increases with $\BaR$. The theorem is stated as a weak monotonicity result for the optimal correspondence because finite action sets may create ties. A conservative implementation can break ties toward lower delegation without violating optimality.

\subsection{Delegation as Adaptive Shrinkage}
The executed decision may be written as
\[
u_t^{Exec}=u_t^{Ref}+\alpha_t(u_t^{AI}-u_t^{Ref}).
\]
Thus the governance policy is an adaptive shrinkage rule toward the independently validated reference process. Classical shrinkage uses a fixed or estimated shrinkage intensity; here the shrinkage factor is chosen by a governance policy whose state includes posterior beliefs, governance risk, reliability diagnostics, and recommendations. This representation explains why admissibility follows from convexity and why complete rejection of AI corresponds exactly to the reference process.

\subsection{Reported Confidence}
Reported confidence can influence governance if it changes Bayesian beliefs, recommendations, reliability diagnostics, consequences, or transition assessments. The confidence-robustness theorem therefore should not be interpreted as claiming that confidence is irrelevant in all settings. It establishes the narrower property tested in the validation experiment: if confidence is perturbed while the validated evidence and recommendation content are fixed, organizational delegation should remain unchanged. This property is the formal counterpart of the managerial claim that confidence is not the same object as trust.

\section{Additional Experimental Results}
\label{app:additional_experiments}

This appendix reports supplementary outputs from the synthetic validation laboratory. To avoid duplicating the main text, scenario growth figures already discussed in Section~\ref{sec:synthetic_validation} are not repeated here. The appendix instead reports additional Monte Carlo diagnostics and provides file-level notes for reproducibility.

\subsection{Supplementary Tables}

\IfFileExists{tables/summary_metrics.tex}{

}{
\begin{table}[htbp]
\centering
\caption{Scenario-level summary metrics.}
\label{tab:app_summary_metrics}
\fbox{\parbox{0.85\textwidth}{Missing table: \texttt{tables/summary\_metrics.tex}}}
\end{table}
}

\subsection{Monte Carlo Diagnostics}

The Monte Carlo diagnostics provide a robustness check on the synthetic validation results. Whereas the main text reports the principal Monte Carlo table, the figures below show how governed performance varies across independent simulation paths and bootstrap resamples.

\begin{figure}[htbp]
\centering
\IfFileExists{figures/mc_mean_cagr_by_scenario.png}{
\includegraphics[width=0.82\textwidth]{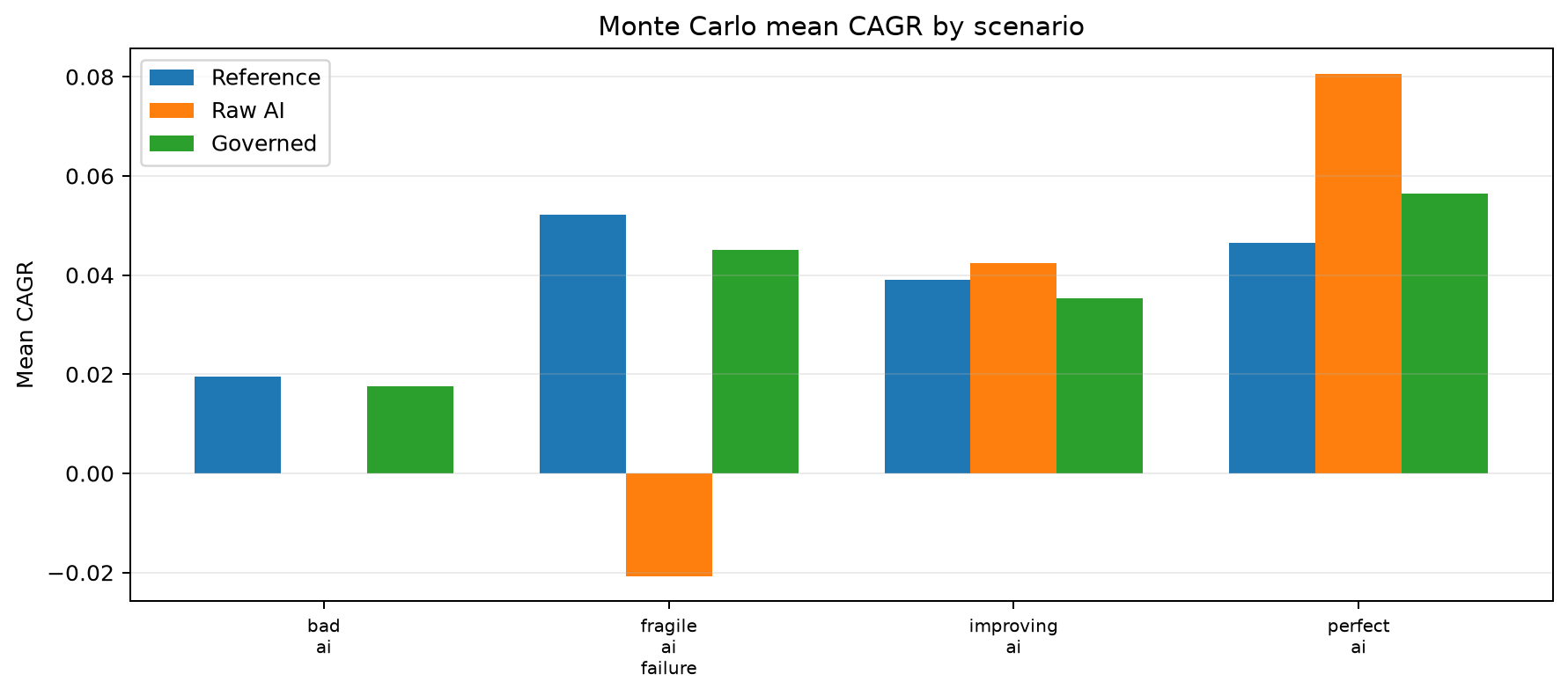}
}{
\fbox{\parbox{0.80\textwidth}{Missing figure: \texttt{figures/mc\_mean\_cagr\_by\_scenario.png}}}
}
\caption{Monte Carlo mean CAGR by scenario. The figure summarizes average performance across independent synthetic realizations.}
\label{fig:app_mc_mean_cagr}
\end{figure}

\begin{figure}[htbp]
\centering
\IfFileExists{figures/mc_convergence_governed_cagr.png}{
\includegraphics[width=0.82\textwidth]{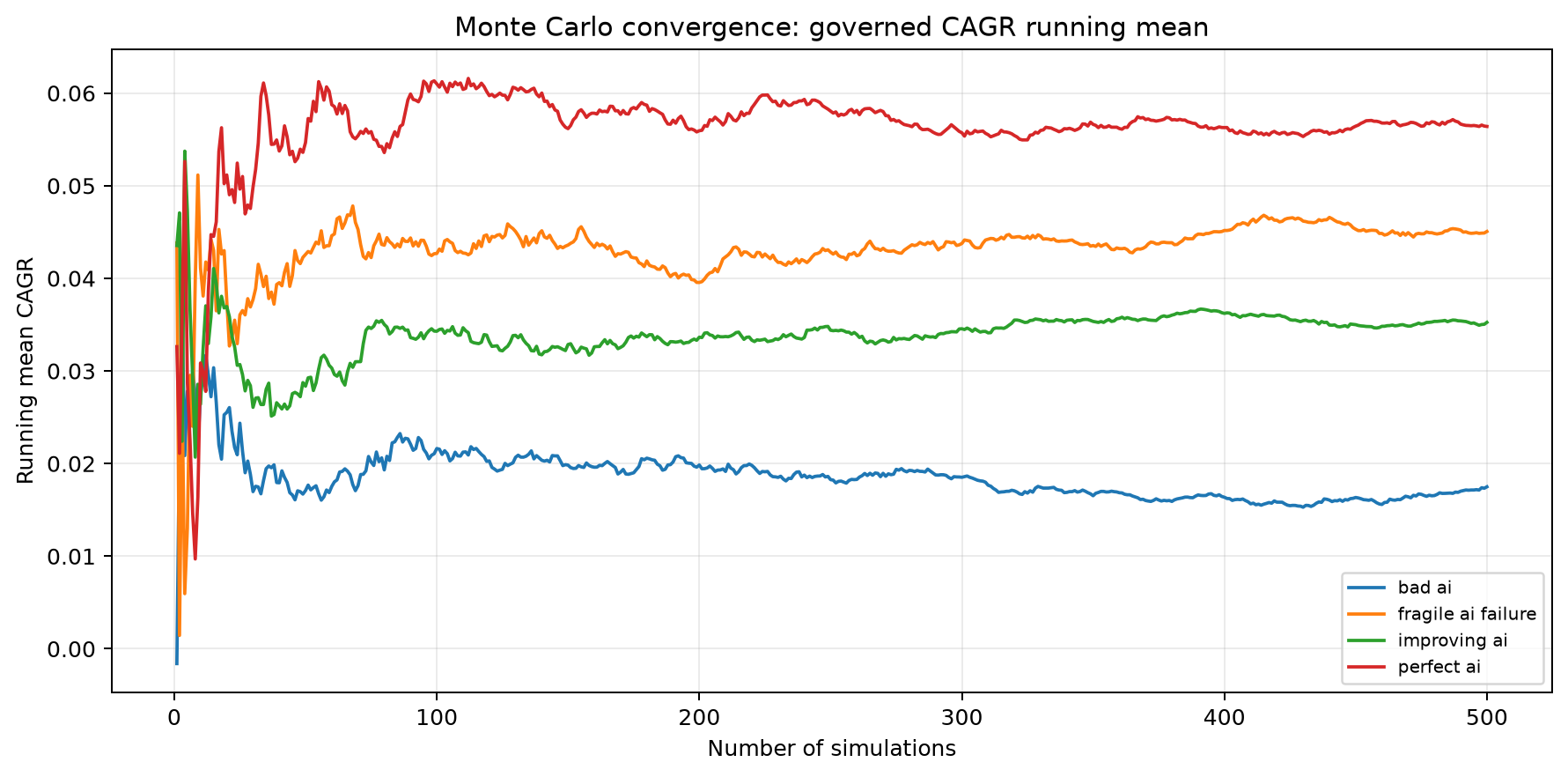}
}{
\fbox{\parbox{0.80\textwidth}{Missing figure: \texttt{figures/mc\_convergence\_governed\_cagr.png}}}
}
\caption{Monte Carlo convergence diagnostic for governed CAGR. The curve verifies that the reported governed performance is not driven by a small number of simulation paths.}
\label{fig:app_mc_convergence}
\end{figure}

\begin{figure}[htbp]
\centering
\IfFileExists{figures/bootstrap_ci_governed_cagr.png}{
\includegraphics[width=0.82\textwidth]{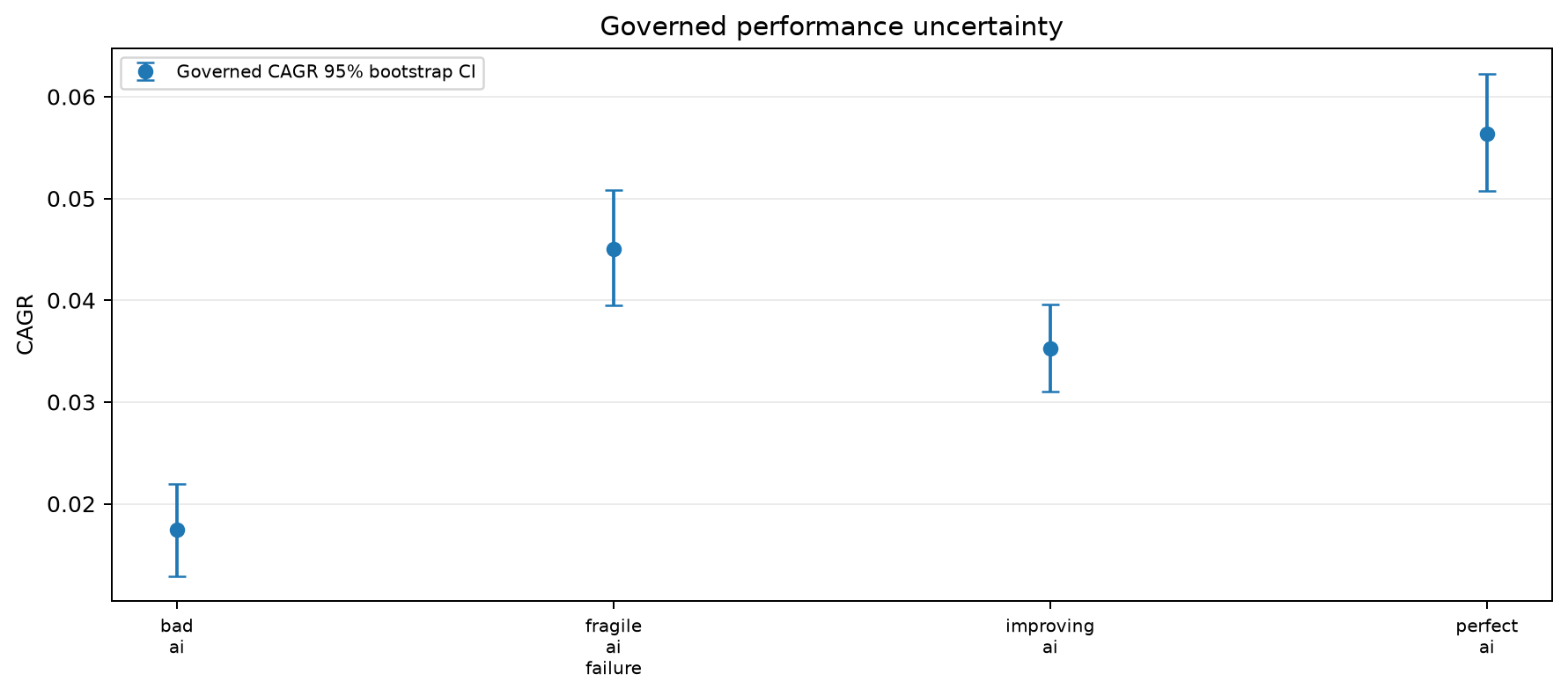}
}{
\fbox{\parbox{0.80\textwidth}{Missing figure: \texttt{figures/bootstrap\_ci\_governed\_cagr.png}}}
}
\caption{Bootstrap confidence intervals for governed CAGR. The intervals summarize uncertainty in governed performance across bootstrap replications.}
\label{fig:app_bootstrap_ci}
\end{figure}

\subsection{Reproducibility Notes}

All figures in the main text and appendix are generated automatically by the synthetic validation laboratory. The final manuscript intentionally excludes earlier placeholder figures such as \texttt{ablation\_tau\_modes.png}, \texttt{sensitivity\_ewma.png}, \texttt{fragile\_ai\_failure\_02\_diagnostics.png}, and \texttt{fragile\_ai\_failure\_03\_exposure.png}, because these were not part of the final figure set and are not required for the theorem-validation narrative.

\section{Computational Details}

\paragraph{Algorithm 1. Governance-Aware Bayesian Decision Process.}
At each decision time $t$:
\begin{enumerate}[leftmargin=*]
\item Receive observations $O_t$.
\item Compute AI outputs $(q_t,u_t^{AI},c_t)=\mathcal L_\theta(O_t)$.
\item Predict the prior belief $\Pi_t=P^\top b_{t-1}$.
\item Update the posterior belief $b_t(i)=q_t(i)\Pi_t(i)/\sum_j q_t(j)\Pi_t(j)$.
\item Compute entropy, belief drift, posterior adverse probability, recommendation disagreement, and $\BaR_t$.
\item Construct the governance state $s_t^G=(b_t,u_t^{AI},u_t^{Ref},\BaR_t,c_t,\eta_t)$.
\item Evaluate the finite governance action set $\G$ using the governance utility $r_G(s_t^G,g_t)$ and approximate Bellman value.
\item Select $g_t=\pi_G^\star(s_t^G)$ and compute $\alpha_t=\alpha(g_t)$.
\item Execute $u_t^{Exec}=\alpha_tu_t^{AI}+(1-\alpha_t)u_t^{Ref}$.
\item Store $(b_t,g_t,\BaR_t,u_t^{Exec})$ for the next decision epoch.
\end{enumerate}

The computational complexity of each governance step is dominated by Bayesian filtering over $K$ latent states and evaluation of a small finite governance action set. This makes the method suitable for replay validation and runtime monitoring.

\bibliographystyle{apalike}
\bibliography{main}

\end{document}